\def\my@tag@font{\normalsize}
\def\maketag@@@#1{\hbox{\m@th\normalfont\my@tag@font#1}}
\let\amsmath@eqref\eqref
\renewcommand\eqref[1]{{\let\my@tag@font\relax\amsmath@eqref{#1}}}
\definecolor{mygray}{gray}{.9}
\definecolor{mypink}{rgb}{.99,.91,.95}
\definecolor{mycyan}{cmyk}{.3,0,0,0}
\titlespacing*{\subsubsection}{0pt}{12pt}{0pt}
\titleformat{\subsection}{\bfseries\slshape}{\thesubsection.}{0.5em}{}
\titleformat{\subsubsection}{\bfseries\slshape}{\thesubsubsection.}{0.5em}{}
\def\thm@space@setup{\thm@preskip=0pt
\thm@postskip=0pt}
\newtheoremstyle{remark}
{5pt} 
{5pt} 
{\itshape} 
{} 
{\bfseries} 
{.} 
{ } 
{} 
\theoremstyle{remark}
\theoremstyle{plain}
\newtheorem{theorem}{Theorem} 
\newtheorem{lemma}[theorem]{Lemma}
\newtheorem{corollary}[theorem]{Corollary}
\theoremstyle{definition}
\theoremstyle{remark}
\newtheorem{remark}{Remark}
\newtheorem{property}{Property} 
\newcommand\apj{\emph{The Astrophysical Journal}}%
\newcommand\mnras{\emph{Monthly Notices of the Royal Astronomical Society}}%
\newcommand\aap{\emph{Astronomy and Astrophysics}}
\def\chf{ch.~f.}
\def\mgf{m.g.f.}
\def\iid{i.i.d.}
\def\apj{Astrophys. J.}
\newcommand\Var{\text{Var}}
\newcommand\E{\text{E}}
\def\Exp{\mathrm{Exp}}
\def\erf{\text{erf}}
\def\GAL{\mathrm{GAL}}
\def\CL{\mathrm{CL}}
\def\gammaD{\mathrm{gamma}}
\begin{document}

\DeclareFixedFont{\Head}{OT1}{phv}{bx}{n}{18pt}
\DeclareFixedFont{\SEC}{OT1}{phv}{bx}{n}{12pt}
\DeclareFixedFont{\Journal}{OT1}{phv}{bx}{n}{11pt}
\DeclareFixedFont{\Name}{OT1}{ptm}{bx}{n}{12pt}%
\DeclareFixedFont{\Address}{OT1}{ptm}{m}{n}{9pt}
\DeclareFixedFont{\Emailaddress}{OT1}{ptm}{bx}{n}{12pt}
\DeclareFixedFont{\Emailaddresscontent}{OT1}{ptm}{m}{n}{9pt}%
\DeclareFixedFont{\Citation}{OT1}{ptm}{bx}{n}{12pt}
\DeclareFixedFont{\Citationcontent}{OT1}{ptm}{m}{n}{9pt}
\DeclareFixedFont{\Date}{OT1}{ptm}{m}{n}{9pt}
\DeclareFixedFont{\Abstract}{OT1}{ptm}{bx}{n}{12pt}
\DeclareFixedFont{\Abstractcontent}{OT1}{ptm}{m}{n}{10pt}
\DeclareFixedFont{\FirstLevel}{OT1}{ptm}{bx}{n}{14pt}
\DeclareFixedFont{\SecondLevel}{OT1}{ptm}{bx}{it}{10pt}
\DeclareFixedFont{\ThirdLevel}{OT1}{ptm}{bx}{it}{10pt}%
\DeclareFixedFont{\Text}{OT1}{ptm}{m}{n}{10pt}
\DeclareFixedFont{\References}{OT1}{ptm}{bx}{n}{14pt}%
\DeclareFixedFont{\ZJL}{T1}{pxtt}{bx}{n}{12pt}
\chapter*{}

\setcounter{page}{1}%
\vspace{-4.65cm}
\noindent
\begin{tabular}[H]{lr}\toprule [1pt] \vspace{-0.33cm}\\
\hspace{-2mm}\Journal{International Journal of Statistical Distributions and Applications} & \hspace{4.9mm}\multirow{5}*{\includegraphics[height=22mm,width=58mm]{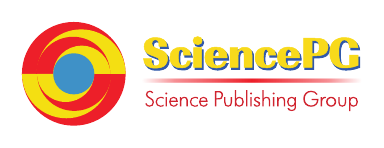}}\\
\hspace{-2mm}\Address{2025; X(X): XX-XX} &\\
\hspace{-2mm}\Address{http://www.sciencepublishinggroup.com/x/xxx} &\\
\hspace{-2mm}\Address{doi: 10.11648/j.XXXX.2025XXXX.XX} &\\
\hspace{-2mm}\Address{ISSN: xxxx-xxxx (Print); ISSN: xxxx-xxxx (Online)} \\\bottomrule [2.5pt]
\end{tabular}
\parskip=12pt
\begin{flushleft}
\noindent \Head{Properties and approximations of a Bessel distribution for data science applications}
\end{flushleft}
\parskip=8pt
\noindent \Name{Massimiliano Bonamente{\bm{$^{1}$}}}
\parskip=8pt

\noindent \Address{$^1$Department of Physics and Astronomy, University of Alabama in Huntsville, Huntsville, AL (USA). ORCID: 0000-0002-8597-9742}
\vspace{5pt}

\parskip=4pt
\noindent \Emailaddress{Email address:}\\
\noindent \Emailaddresscontent{bonamem@uah.edu}

\parskip=8pt
\noindent \Citation{To cite this article:}
\begin{flushleft}
\vspace{-0.56cm}
\noindent \Citationcontent{Authors Name. (2025). Paper Title. {\it Journal of XXXXXX, Volume}(Issue), Page Range. DOI Link}
\vspace{-0.3cm}
\end{flushleft}
\parskip=10pt

\noindent \Date{{\bf Received:} DD MM 2025; {\bf Accepted:} DD MM 2025; {\bf Published:} DD MM 2025}

\parskip=5pt
\noindent \rule{\textwidth}{1pt}

\parskip=6pt
\noindent \Abstract{Abstract: }\Abstractcontent{\baselineskip=12pt This paper presents properties and approximations of a random variable based on the zero--order modified
  Bessel function that results from 
the compounding of a zero-mean Gaussian with a $\chi^2_1$-distributed variance.
  This family of distributions is a special case of
  the McKay family of Bessel distributions and of a family of generalized
  Laplace distributions. It is found that the Bessel distribution can
  be approximated with a null-location Laplace distribution, which
  corresponds to the compounding of a zero-mean Gaussian with a $\chi^2_2$-distributed variance. Other useful
  properties and representations of the Bessel distribution are discussed, including a
  closed form for the cumulative distribution function that makes use of the modified Struve
  functions. Another approximation of the Bessel distribution that is based on an empirical 
  power-series approximation is also presented. The approximations are tested with the
  application to the typical problem of statistical hypothesis testing. It is found that 
  a Laplace distribution of suitable scale parameter can approximate quantiles of the
  Bessel distribution with better than $10$~\% accuracy, with the computational advantage associated
  with the use of simple elementary functions instead of special functions. It is expected that the approximations proposed in this paper be useful for a variety of data science applications 
  where analytic simplicity and computational efficiency are of paramount importance.}

\parskip=8pt
\hangafter=1
\setlength {\hangindent}{6.2em}
\noindent \Abstract{Keywords: }\Abstractcontent{Bessel distribution, Laplace distribution,
Generalized Laplace distribution,
Gaussian distribution,
Variance-gamma distribution,
Compounding of random variables, hypothesis testing}
\parskip=3pt

\noindent \rule{\textwidth}{1pt}
\parskip=0pt
\columnseprule=0pt
\setlength{\columnsep}{1.5em}
\vspace{-12pt}
\begin{multicols}{2}

\section{Introduction}


Many experiments and natural phenomena are  described by
random variables whose parameters are themselves variable according to another
distribution. This \textit{compounding} of random variables is well documented
in the literature, such as the Poisson distribution with a gamma-distributed
rate parameter, which results in a negative binomial distribution \citep[e.g.][]{greenwood1920, bliss1953, hilbe2011}. 
Compound distributions are sometimes referred to as \textit{contagious} distribution \citep[e.g.][]{gurland1958}, based on the fact that certain models of diseases feature a (positive) correlation between the number and the probability of occurrence of certain outcomes \citep[see, e.g.,][]{greenwood1920}. More recently, this type of distributions are referred to as
\textit{mixtures}, and they are commonly used in a variety of machine learning applications \citep[e.g., the EM algorithm][]{hastie2009, baum1970}.

For continuous variables, the compounding of a normal distributions is a topic of general interest. While the compounding of a normal distribution with mean distributed according 
to another normal distribution is known to
retain the normal shape, this is not the case when compounding over the variance parameter, leading to 
substantially more complex distributions.
For example,
the \emph{variance gamma} distribution results from the compounding of a normal distribution
with a gamma--distributed variance, and it is often used for daily stock market returns \citep[e.g.][]{madan1990}. Similarly, 
a scaled $t$ distribution results from the compounding with another empirical distribution
for the variance \citep{praetz1972}. A recent review of theory and applications for the compounding of two
normal variables is provided by \cite{gaunt2022}.

This paper is concerned with the case of a random variable $Y \sim N(0,X^2)$ with 
variance distributed as $X \sim N(0, \sigma^2)$, a problem
first studied nearly a century ago by \cite{wishart1932} and \cite{craig1936}. The distribution of $Y$ can also be described as the compounding of a normal variable with a $\chi^2_1$-distributed variance.~\footnote{Given that the chi-squared distribution belongs to the gamma family, the Bessel
distribution is therefore also a special case of a variance--gamma distribution.} The $Y$ variable also admits the equivalent representation $Y= \sigma Z_1 Z_2$,
where $Z_1$ and $Z_2$ are two independent standard normal variables \cite{craig1936}. This variable is distributed according to the
Bessel distribution \citep[e.g.][]{mckay1932}, and it is a special case of a family of
generalized Laplace distributions \citep[for a review, see][]{kotz2001}.
The Bessel distribution is used in a variety of applications \citep[e.g.][]{madan1990, iliopoulos2003,progri2016, gorska2022} and it is known to be difficult to simulate \citep[e.g.][]{yuan2000}. The renewed interest in this compounded distribution is also motivated by its occurrence
in a model of systematic errors for the physical sciences that was recently proposed by the author \citep{bonamente2023, bonamente2025a, bonamente2025b}.  It is therefore of practical interest to study this distribution and to find approximations that make it easier to it use
for various data science applications.

The goal of this  paper is twofold. 
First, it presents a review of the properties of the Bessel random variable, with emphasis
on equivalent representations as a function of other known variables with simpler analytic properties. Second, it provides
approximations to its distribution function that overcome
the singularity and complexity of the Bessel function $K_0(x)$. This paper is therefore
aimed at improving the use of the Bessel distribution by means of simple and statistically-motivated approximations. Such approximations are  convenient for a variety of statistics and data science applications, in that analytic
simplicity leads to ease of use and interpretation for the broader
data science community, and to the reduction of computational resources, which is important
in large-scale applications, e.g., in machine learning and artificial intelligence.

 \section{A Bessel distribution from the compounding of two zero--mean normal distributions}
 \label{sec:compounding}
\label{bessel}
\subsection{Distribution functions and equivalent representation}
 Let $Y \sim N(0, a^2 X^2)$ and  $X \sim N(0, \sigma^2)$, with $a^2$ and $\sigma^2$ 
 two non--negative constants, be two independent normal random variables. 
 Conditionally on $X=x$, $Y$ has a normal distribution with zero mean 
 and variance $a^2 x^2$.
 The  probability density function (pdf) of  $Y$  is given by conditioning on the values of the $X$ variable,
 \begin{equation}
   f_Y(y) = \int_{-\infty}^{\infty} 
   f_{Y/X}(y) \cdot f_X(x)\, dx,
 \end{equation}
where $f_{Y/X}(y)$ and $f_X(x)$
are respectively the conditional distribution of $Y$ given $X$ (hereafter represented as $Y/X$) and that of $X$. The 
distribution of the $Y$ variable
is said to result from from the compounding or mixing of 
two normal distributions. For a textbook reference of compounding distributions, see e.g. Sec.~5.13 of \cite{kendall1977} or Sec.~4.3 of \cite{mood1974}.
The $Y$ variable admits the equivalent representation
 \begin{equation}
	 Y = a \sigma Z_1 Z_2
  \label{eq:BesselRepresentation1}
 \end{equation}
 where $Z_1$ and $Z_2$ are two standard normal distributions, as used by \cite{craig1936}, 
 see also Lemma~\ref{lemma3}  in Appendix~\ref{app1}. Given that the parameters $a$ and $\sigma$ are completely degenerate, 
one may set $a=1$ without loss of generality, and the random variable will depend only on the parameter $\sigma$. This compound random variable has therefore the
 same distribution as described by \cite{craig1936}; for a more recent review of the product of two
 random variables, see e.g. \cite{macias2020}.

The probability density function of $Y$ is
%
%
\begin{equation}
\begin{aligned}
  f_Y(y) & =  \dfrac{1}{\pi \sigma}  \int_{0}^{\infty} e^{\displaystyle - \dfrac{y^2}{2  x^2}}  \cdot
  e^{\displaystyle -  \dfrac{x^2}{2 \sigma^2}}\cdot  \dfrac{dx}{x} \\
  & = \dfrac{1}{\pi \sigma} K_0 \left(\left| \dfrac{y}{\sigma} \right| \right)\, \text{ for } y \in \mathbb{R}
  \end{aligned}
  \label{eq:fK}
\end{equation}
(see Lemma~\ref{lemma1} and Corollary~\ref{corollary1} in App.~\ref{app1}).
The integral is of a form than can be described via modified Bessel function
of the second kind,  \citep[e.g., see 3.478.4 in][]{gradshteyn2007}.
%
%
This distribution is referred to as a \emph{Bessel distribution} (of order zero), and it is referred to as $K(\sigma)$, where $\sigma > 0$ will be
referred to as the \emph{scale parameter} of the Bessel distribution. 
This distribution was obtained by \cite{wishart1932} and \cite{craig1936}  using the product of two standard normal
distributions.
The distribution in Eq.~\ref{eq:fK} is a special case of the family of Bessel distributions
proposed by McKay \cite{mckay1932}, for $c=m=0$ in their Eq.~1, and it is used in a variety of applications \citep[e.g.][]{laha1954, mcleish1982, gorska2022}. As noted already, the Bessel
distribution is also a special case of a variance-gamma distribution that is especially used in 
business applications \citep[e.g.][]{madan1990}.

The distribution is illustrated in Fig.~\ref{fig:Yn};
the singularity at $y=0$ is one of the reasons that make this
distribution notoriously difficult to simulate \citep[e.g.][]{yuan2000, macias2020}. 
  The reason for this 
  singularity can be seen as the compounding of the variance of the $Y$ variable with a normal distribution with fixed variance $\sigma^2$.
  For small random values of $X$, the associated variance of $Y$ is correspondingly small, leading to an unbounded peak in the
  density at $y=0$. There is therefore practical interest
to provide statistically motivated approximations to this distribution that have simpler
analytic properties.




It is also possible to give the cumulative distribution function (CDF) of a $Y\sim K(\sigma)$ via modified
Struve functions $L_{\nu}(x)$ as
\begin{equation}
\begin{aligned}
    F_Y(y) = & \dfrac{1}{2} + \dfrac{y}{2 \sigma} \left( K_0\left(\left|\dfrac{y}{\sigma}\right|\right) L_{-1}\left(\left|\dfrac{y}{\sigma}\right|\right) + \right.\\
    & \left. K_1\left(\left|\dfrac{y}{\sigma}\right|\right) L_0\left(\left|\dfrac{y}{\sigma}\right|\right)\right) \, \text{ for } y \in \mathbb{R}.
    \end{aligned}
    \label{eq:FK}
\end{equation}
\begin{proof}
    The result is obtained by use of the indefinite integral
    \begin{equation}
        \int K_0(x) dx = \dfrac{\pi x}{x}(K_0(x)L_{-1}(x) + K_1(x)L_{0}(x))
    \end{equation}
    which is an immediate  consequence of integral 10.43.2 in \cite{NIST:DLMF}, using $K_{-1}(x)=K_1(x)$ and with elementary integral calculus.
\end{proof}    

This closed form for the CDF is particularly convenient in numerical applications, in that it
enables the use of available numerical methods for the modified Struve and Bessel functions 
directly, without the need for further ad-hoc numerical integration of the function $K_0(x)$, which can be computationally more challenging, in particular with respect to its singularity.~\footnote{In \texttt{python}, the \texttt{modstruve} and \texttt{kn} functions in \texttt{scipy} 
lead to order-of-magnitude reduction in computational speed for the evaluation of $F_Y(y)$, compared
to a direct numerical integration of the \texttt{kn} function with, e.g., \texttt{quad} or similar routines.}

\subsection{Approximation via a Laplace variable}
\label{sec:laplace}

Although a number of numerical approximations to the  Bessel function, and therefore to the
Bessel distribution, 
are available in the literature \citep[e.g.][]{macias2020, martin2022}, it is interesting to
investigate an approximation that is both numerically accurate and statistically motivated.
For this purpose, this paper explores approximations of the Bessel distribution via
the class of Laplace distributions and that of generalized Laplace distributions \citep[e.g.][]{kotz2001}.

The Laplace distribution, also known as double exponential distribution, has density
    \begin{equation}
  f_{L}(x) = \dfrac{1}{2 s} e^{\displaystyle - \dfrac{ |x-\theta|}{s}} \; \text{ for } x\in \mathbb{R}
  \label{eq:laplace}
\end{equation}
with $s>0$ a scale parameter, and $\theta$ a location parameter that will be hereafter set to zero for this application. In the notation of \cite{kotz2001}, this is referred to as a \emph{classical} 
Laplace distribution $\CL(\theta,s)$, and this notation will be used throughout.
The Laplace distribution of Eq.~\ref{eq:laplace} has mean $\theta$ and variance $2 s^2$.~\footnote{
In the notation of \cite{kotz2001}, this distribution is referred to as the {classical}
Laplace distribution, whereas a \textit{standard} Laplace distribution is a re-parameterization
of Eq.~\ref{eq:laplace} with $s=\sigma/\sqrt{2}$.} Its characteristic function (\chf) is
\begin{equation}
    \psi_L(t) = \dfrac{1}{1+s^2 t^2}\, \text{ for } t \in \mathbb{R}.
    \label{eq:laplaceChf}
\end{equation}

A random variable $X \sim \CL(0,s)$ admits the representation
\begin{equation}
    X= s \sqrt{2 W} Z
    \label{eq:LaplaceRepresentation}
\end{equation}
where $W \sim \Exp(\alpha=1)$ is a standard exponential with rate $\alpha=1$, and $Z$ the usual standard normal variable; for a proof, see Proposition~2.3.1 of \citep{kotz2001}. 
Since
the gamma distribution is a scale family, whereas $\forall c \in \mathbb{R}$
and with $T \sim \mathrm{gamma}(\alpha, r)$, the random variable
$c T$ retains the same shape parameter $r$ and has rate parameter $\alpha/c$, $cT \sim \mathrm{gamma}(\nicefrac{\alpha}{c},r)$.
An exponential distribution is a gamma distribution with shape
parameter $r=1$, thus $W \sim \mathrm{gamma}(\alpha=1,r=1)$. Using the scale property of the gamma family of distributions,
then $2 W \coloneqq S_2 \sim \mathrm{gamma}(\nicefrac{1}{2},1) \sim \chi^2_2$. Thus,
\begin{equation}
    X= s \sqrt{S_2} Z.
    \label{eq:LaplaceRepresentation2}
\end{equation}
On the other hand, the representation of a Bessel variable  $Y\sim N(0,X^2)$
with $X\sim N(0,\sigma^2)$, i.e. $Y \sim K(\sigma)$  as in Eq.~\ref{eq:BesselRepresentation1}, is
 \begin{equation}
     Y = \sigma \sqrt{S_1} Z
     \label{eq:BesselRepresentation2}
 \end{equation}
where $S_1 \sim \chi^2_1$,
and $Z$ is an independent standard normal. 

\begin{remark}
A comparison between the two representations in Eqns.~\ref{eq:LaplaceRepresentation2} and \ref{eq:BesselRepresentation2} suggests an approximation to the Bessel
distribution with a Laplace distribution. The approximation consists of 
replacing the assumption of a $\chi^2_1$--distributed variance for the Bessel 
distribution  with 
a $\chi^2_2$--distributed variance for the Laplace distribution.    
\end{remark}

\subsection{Representation as a generalized Laplace variables}
\label{sec:GAL}
A class of generalized asymmetric Laplace ($\GAL$) distributions is defined by the \chf\ 
\begin{equation}
\psi_{\GAL}(t) = \dfrac{e^{i \theta t}}{\left(1 + \dfrac{\sigma^2 t^2}{2} - i \mu t)\right)^{\tau}}    \label{eq:chfGAL}
\end{equation}
with $t \in \mathbb{R}$, and it is referred to as $\GAL(\theta, \mu, \sigma,\tau)$ in \cite{kotz2001}. In this paper, it will always be true that $\theta=\mu=0$, and therefore the
distribution will be symmetrical about the origin, same as the classical Laplace. 

A key property of the generalized Laplace distribution is that its probability distribution
function can be described via  Bessel
functions of the second type and of order $\tau-\nicefrac{1}{2}$ \citep[see Eq.~4.1.30 of][]{kotz2001}.
For $\tau=\nicefrac{1}{2}$, the pdf is
\begin{equation}
    f_{\GAL}(x) = \dfrac{1}{\pi (\sigma/\sqrt{2})} K_0\left(\dfrac{|x|}{\sigma/\sqrt{2}}\right)
\end{equation}
which is the same as Eq.~\ref{eq:fK} when $s=\sigma/\sqrt{2}$ is used. It is therefore clear that
a $\GAL(0,0,\sigma/\sqrt{2},\nicefrac{1}{2})$ random variable has the same distribution as the Bessel
variable $K(\sigma)$. On the other hand,
when the order of the Bessel  function is semi-integer ($r+\nicefrac{1}{2}$ with $r\in \mathbb{N}$),
the Bessel
functions of the second kind are known to have a closed form for its pdf with elementary functions (see App.~\ref{app0}). This property is used when considering the 
sum and average of Laplace variables in Sec.~\ref{sec:sumLaplace}.

The same result can be obtained using the fact that
a random variable $U \sim \GAL(0,\mu,\sigma,\tau)$ admits the representation
\begin{equation}
    U = \mu\,W + \sigma \sqrt{W} Z
    \label{eq:representationGAL}
\end{equation}
where $W$ is a gamma random variable with rate $\alpha=1$ and shape $r=\tau$,
$W \sim \mathrm{gamma}(1, \tau)$ \citep[for a proof, see Sec.~4.1.2 of ][]{kotz2001}. 
\begin{remark}
The representation of Eq.~\ref{eq:representationGAL} shows that a general GAL variable 
is equivalent to the variance gamma family of distributions introduced by \cite{madan1990},
as also pointed out by \cite{kotz2001}.
\end{remark}

This paper is concerned with a zero--mean variable, $\mu=0$, thus $U = \sigma \sqrt{W} Z$. Therefore, compounding a standard normal
with a $\mathrm{gamma}(1, \tau)$ distributed variance leads to a Bessel distribution with parameter $\sigma/\sqrt{2}$, or equivalently a $\GAL(0,0,\sigma,\tau)$.
According to the scale property of the gamma distribution,
$2 W = S_1 \sim \chi^2_1$. Thus, a $U \sim \GAL(0,0, \sigma,\tau=\nicefrac{1}{2})$ variable admits the representation
\begin{equation}
    U = \left(\dfrac{\sigma}{\sqrt{2}}\right) \sqrt{S_1} Z,
    \label{eq:representationGAL1}
\end{equation}
which is the same as the usual representation (Eq.~\ref{eq:BesselRepresentation2}) of the Bessel
variable $K(\sigma/2)$.

\begin{remark}
    The representation in Eq.~\ref{eq:representationGAL1} 
    shows how the Bessel distribution $K(\sigma)$ can be obtained
    as a special case of the family of generalized Laplace distributions 
    $\GAL(0,0,\sigma/\sqrt{2},\nicefrac{1}{2})$. 
\end{remark}

\subsection{Properties of the zero--order Bessel distribution}
Key properties of the Bessel distribution are summarized in this section.
\begin{property}
The \chf\ of a $Y \sim K(\sigma)$ random variable is
\begin{equation}
    \psi_K(t)=\dfrac{1}{\sqrt{1 + \dfrac{\sigma^2 t^2}{2}}},\; t \in \mathbb{R}
    \label{eq:chfK}
\end{equation}
and the moment generating function(\mgf) is
\begin{equation}
    M_K(t)=\dfrac{1}{\sqrt{1 - \dfrac{\sigma^2 t^2}{2} }},\; \text{ for } -\dfrac{2}{\sigma^2} < t < \dfrac{2}{\sigma^2}.
\label{eq:mgfK}
\end{equation}
\end{property}
\begin{proof}
    This is an immediate consequence of the fact that $Y \sim \GAL(0,0,\sqrt{2}\sigma,\nicefrac{1}{2})$ and the \chf\ of the generalized Laplace distribution in Eq.~\ref{eq:chfGAL}. An equivalent proof is also provided in Corollary~\ref{corollary:chf} in Appendix~\ref{app1}. The \mgf\ is 
    then simply obtained as $M_K(t)=\psi_K(t/i)$ \citep[see, e.g., Eq.~5.1.12 of][]{wilks1962}.
\end{proof}
Moments of $Y \sim K(\sigma)$ can also be given by means of the following integral, without use
of the \mgf:
\begin{equation}
    \int_{0}^{\infty} t^{\mu-1} K_0(t) dt = 2^{\mu-1} \Gamma(\mu/2)^2,
\end{equation}
which is a special case of integral 10.53.19 in \cite{NIST:DLMF}. 

\begin{property}[Mean and variance of the zero--order Bessel distribution]
The mean is $\E[Y]=0$ and the variance is $\Var(Y)=\sigma^2$.
\end{property}
\begin{proof}
A proof follows immediately from the \mgf\ of $Y$ given in Eq.~\ref{eq:mgfK}. Alternatively, a proof can be given 
using the compounding representation of $Y$. Indicate with $G$ the normal distribution of $X \sim N(0,\sigma^2)$
  and with $H$ the normal distribution of $Y/X \sim N(0, X^2)$. 
  The mean and variance of $Y$ are therefore
  \begin{equation}
    \begin{cases}
      \E[Y] = \E_{G}[E_{H}[Y/X]] = 0\\[5pt]
      \Var(Y)= E_{G}[\Var_H(Y/X)] + \Var_G(\E_H[Y/X]) = \sigma^2,
    \end{cases}
  \end{equation}
  where the usual conditional variance formula was used \citep[e.g., Sec.~4.3 of][]{mood1974}.
\end{proof}

Now, with $Y \sim K(\sigma)$ and $X \sim CL(0,s)$, and since $\Var(Y)=\sigma^2$ and $\Var(X)=2 s^2$, it is clear that the use of $s= \sigma/\sqrt{2}$ 
would results in an approximating Laplace distribution with same variance as the given Bessel distribution $K(\sigma)$. In applications where $\sigma$ is a parameter of physical interest, such as the 
model of systematic errors in \cite{bonamente2025a, bonamente2025b}, it is therefore reasonable
to parameterize the Laplace distribution as $K(\sigma/\lambda)$ with $\lambda=\sqrt{2}$. Further 
discussion of this approximation and methods to estimate the best scale factor for the
approximating Laplace distribution are presented in Sec.~\ref{sec:approximations}.

\begin{property}
    The $\sigma$ parameter of a $K(\sigma)$ distribution is a genuine scale parameter.
    \label{prop:scale}
\end{property}
\begin{proof}
    This property can be immediately seen from the form of Eq.~\ref{eq:FK} for the CDF, which depends only
    on the ratio $y/\sigma$.
\end{proof}
The same scale property is also shared by the Laplace distribution $\CL(0,\sigma)$.

\begin{property}[Laplace variable as sum of two Bessel variables] Let $Y_1$, $Y_2$ be two 
independent and identically distributed (iid) 
$K(\sigma)$. Then $X=Y_1+Y_2$ is a Laplace variable $\CL(0,\sigma/\sqrt{2})$.
\label{prop3}
\end{property}
\begin{proof}
    The proof is an immediate consequence of the \chf\ of a Bessel distribution 
    in Eq.~\ref{eq:chfK}, and
    of the fact that a Bessel distribution is a $\GAL(0,0,\sqrt{2}\sigma,\nicefrac{1}{2})$ 
    with \chf\ as in Eq.~\ref{eq:chfGAL}, and the \chf\ of a Laplace variable given in Eq.~\ref{eq:laplaceChf}.
\end{proof}
\begin{remark}
    Properties~\ref{prop:scale} and \ref{prop3} further underline the relationship between the Laplace and Bessel distributions. In particular, the singularity at the origin of the pdf for a Bessel variable disappears
    when two identical Bessel variables are summed, resulting in a Laplace variable which has a bounded pdf at the origin. This result can be clearly seen as the first step towards the "normalization" of 
    the sum of Bessel variables, according to the central limit theorem, when many \iid\ variables are summed.
\end{remark}

\section{Sum and  average of Laplace variables}
\label{sec:sumLaplace}
Given the approximation of a Bessel distribution with a Laplace that was suggested in Sec.~\ref{sec:compounding}, it is further investigated whether it is possible to
approximate a Bessel distribution with the sum of Laplace variables. 
The characteristic functions of the two families of distributions already shows that this will 
not be possible in an exact sense. In fact, as also discussed in Prop.~\ref{prop3}, it is the
Laplace distribution that can be obtained as the sum of two Bessel variables, and not vice-versa. Nonetheless, it
is useful to present key results about the sum and average of Laplace variables, which can be of
interest in certain associated data science applications.

First the result provided by \cite{kotz2001} for the sum of standard classical Laplace
variables is generalized in the following lemma.
\begin{lemma}[Representation of sum of iid Laplace variables]
The sum of $n$ \iid\ standard Laplace random variables $Y_i \sim \CL(0,s)$ has the representation
\begin{equation}
    \sum_{i=1}^n Y_i = \sqrt{2} s \sqrt{G_n} Z
    \label{eq:sumLaplace}
\end{equation}
where $G_n \sim \mathrm{gamma}(1,n)$ is a gamma distribution with rate parameter 1 and 
shape parameter $n$.
\label{lemma:representationSumLaplace}
\end{lemma}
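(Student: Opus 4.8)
The plan is to work entirely with characteristic functions and to invoke the uniqueness theorem, since the asserted ``representation'' is an equality in distribution (exactly as in Eqns.~\ref{eq:LaplaceRepresentation} and \ref{eq:representationGAL}), with $Z$ understood to be independent of $G_n$. First I would record that a single $Y_i \sim \CL(0,s)$ has \chf\ $\psi_L(t) = 1/(1+s^2 t^2)$ by Eq.~\ref{eq:laplaceChf}, so that by independence the sum $\sum_{i=1}^n Y_i$ has \chf\ $(1+s^2 t^2)^{-n}$.

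Next I would compute the \chf\ of the right-hand side $\sqrt{2}\,s\,\sqrt{G_n}\,Z$ by conditioning on $G_n$. Given $G_n=g$, this variable is normal with zero mean and variance $2 s^2 g$, hence has conditional \chf\ $e^{-s^2 g t^2}$; integrating this against the $\mathrm{gamma}(1,n)$ density $g^{n-1}e^{-g}/\Gamma(n)$ on $(0,\infty)$ and using the elementary identity $\int_0^\infty g^{n-1} e^{-\beta g}\,dg = \Gamma(n)\,\beta^{-n}$ with $\beta = 1 + s^2 t^2$ produces exactly $(1+s^2 t^2)^{-n}$. Equality of the two characteristic functions, together with the uniqueness theorem, then gives Eq.~\ref{eq:sumLaplace}.

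Alternatively, and more in keeping with the framework already in place, I would observe that $\CL(0,s)$ coincides with $\GAL(0,0,\sqrt{2}\,s,1)$ upon comparing Eq.~\ref{eq:laplaceChf} with Eq.~\ref{eq:chfGAL} at $\tau=1$; since the $\GAL$ \chf\ in Eq.~\ref{eq:chfGAL} is multiplicative in $\tau$ (for fixed $\theta=\mu=0$ and $\sigma$), the $n$-fold convolution is $\GAL(0,0,\sqrt{2}\,s,n)$, and the representation Eq.~\ref{eq:representationGAL} with $\mu=0$, $\sigma=\sqrt{2}\,s$, $\tau=n$ — so that $W\sim\mathrm{gamma}(1,n)$ — reads precisely $\sqrt{2}\,s\,\sqrt{W}\,Z$, which is Eq.~\ref{eq:sumLaplace} with $G_n=W$.

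I do not expect a genuine obstacle here; the only points that merit care are stating explicitly that $Z$ is taken independent of $G_n$, justifying the conditioning step (the conditional \chf\ is bounded by $1$, so Fubini applies), and being clear that ``representation'' means equality in law rather than pathwise equality. The characteristic-function route is self-contained; the $\GAL$ route is shorter but relies on the representation Eq.~\ref{eq:representationGAL} quoted from \cite{kotz2001}.
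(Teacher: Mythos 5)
Your proof is correct and takes essentially the same route as the paper: both sides are identified through their characteristic functions, with the mixture $\sqrt{2}\,s\,\sqrt{G_n}\,Z$ shown to have characteristic function $(1+s^2t^2)^{-n}$ by compounding a zero-mean normal with a $\mathrm{gamma}(1,n)$-distributed variance --- the paper does this via its Lemma~\ref{lemma:chfZ} together with the gamma moment generating function, while you perform the equivalent conditioning integral explicitly. Your remarks on independence of $Z$ and $G_n$ and on equality in law are appropriate clarifications rather than departures from the paper's argument.
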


\begin{proof}
 First, the \chf\ of $Y_i \sim \CL(0,s)$ is
\[
\psi_L(t)=\dfrac{1}{1+s^2 t^2}
\]
(see, e.g., 2.4.2 in \cite{kotz2001} and Eq.~\ref{eq:laplaceChf}), and  the \chf\ for the sum of $n$ identical copies is
\[
\psi(t)=\left(\dfrac{1}{1+s^2 t^2}\right)^n.
\]
According to Lemma~\ref{lemma:chfZ}, the characteristic function for the compounding of a normal
variable $Z$ with a variance distributed like $s^2\,U$, with $s$ a constant, is
\begin{equation}
    \psi(t)= M_U \left(\dfrac{s^2 t^2}{2}\right)    
    \label{eq:chfU}
\end{equation}
where $M_U$ is the \mgf\ of the $U$ distribution. Since the \mgf\ of a $\gammaD(1,n)$
variable is
\[
M_{\gamma}(t)=\left(\dfrac{1}{1-t}\right)^n
\]
it follows that Eq.~\ref{eq:chfU} is satisfied by $U=\gammaD(1,n)$ for a Laplace
parameter $\sqrt{2} s$, and therefore
\[
\sum_{i=1}^n Y_i = \sqrt{2}\,s\, \sqrt{G_n} Z.
\]
which extends the proof of \cite{kotz2001}, see their Proposition~2.2.10, to any zero--mean Laplace
distribution with parameter $s$.
\end{proof}

\begin{remark}  
Lemma~\ref{lemma:representationSumLaplace} shows that the sum of Laplace variables with parameter $s$
is equal to the compounding of a normal variable with variance that is distributed as $ 2 s^2 G_n$, with 
$G_n \sim \mathrm{gamma}(1,n)$. On the other hand, the Bessel variable in  Eq.~\ref{eq:BesselRepresentation2} has a variance
that is distributed as $ \sigma^2 S_1$, with $S_1 \sim \chi^2_1 = \mathrm{gamma}(\nicefrac{1}{2},2).$
It is therefore not possible to obtain a Bessel variable (of order zero) as the sum of any number of classical Laplace variables, as already noted.
\end{remark}

A more useful representation of the sum of Laplace variables, and one that has a simple analytic 
density, makes use of the representation of a Laplace variable $Y_i \sim \CL(0,s)$ 
as the difference of two standard exponential variables,
\begin{equation}
    Y_i = s (W_1-W_2)
    \label{eq:laplaceExp}
\end{equation}
where $W_1$ and $W_2$ are two iid standard exponential variables \citep[Proposition~2.2.1 ff.,][]{kotz2001}.
The representation of the average of $n$ standard Laplace variables is provided by Eq.~2.3.24 
of \cite{kotz2001}, which is report in the following with a generalization to the sum of zero--mean Laplace
variables with parameter $s$.

\begin{lemma}[Representation of the average of $n$ Laplace variables]
\label{lemma2}
Let $Y_i \sim \CL(0,s)$, $i=1, \dots, n$ be $n$ \iid\ Laplace variables. Then,
\begin{equation}
T=n \overline{Y}_n = \sum\limits_{i=1}^n Y_i = s (G_1-G_2)
\label{eq:T}
\end{equation}
    where $G_1$ and $G_2$ are two \iid\ $\mathrm{gamma}(1,n)$ variables.
\end{lemma}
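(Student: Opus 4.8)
The plan is to derive the statement directly from the difference-of-exponentials representation recalled in Eq.~\ref{eq:laplaceExp}, which is precisely the tool assembled just before the lemma. First I would take $n$ independent copies of that representation, writing $Y_i = s\,(W_{1,i} - W_{2,i})$ for $i = 1,\dots,n$, where $\{W_{1,i},W_{2,i}\}_{i=1}^n$ is a family of $2n$ mutually independent standard exponential variables. Since each $Y_i$ so constructed has the correct marginal law $\CL(0,s)$ and the $n$ blocks $(W_{1,i},W_{2,i})$ are independent of one another, the $Y_i$ are iid $\CL(0,s)$, so no generality is lost in using this particular construction.

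Next I would regroup the sum by linearity,
\[
T = \sum_{i=1}^n Y_i = s\left(\sum_{i=1}^n W_{1,i} - \sum_{i=1}^n W_{2,i}\right) =: s\,(G_1 - G_2),
\]
and identify the two pieces. A sum of $n$ iid standard exponentials is an Erlang variable, i.e. a $\mathrm{gamma}(1,n)$ variable in the rate/shape convention used in this paper; this follows, for instance, by convolving the exponential densities, or equivalently by multiplying their moment generating functions $(1-t)^{-1}$ to get $(1-t)^{-n}$. Hence $G_1$ and $G_2$ are each $\mathrm{gamma}(1,n)$, and they are independent because they are functions of the two disjoint independent families $\{W_{1,i}\}_i$ and $\{W_{2,i}\}_i$. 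This gives $T = n\overline{Y}_n = s\,(G_1 - G_2)$ with $G_1,G_2$ iid $\mathrm{gamma}(1,n)$, recovering Eq.~2.3.24 of \cite{kotz2001} in the standard case $s=1$ and extending it to arbitrary scale $s$.

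I do not expect a genuine obstacle here: the only point that needs a word of care is that the $2n$ exponentials can be chosen mutually independent while still reproducing the prescribed joint law of $(Y_1,\dots,Y_n)$, and this is automatic from the block structure just described. As an alternative I would note (in a remark, rather than as the main argument) a purely characteristic-function route: raising the Laplace \chf\ $1/(1+s^2t^2)$ to the $n$-th power and factoring it as $(1-ist)^{-n}(1+ist)^{-n}$ exhibits the two factors as the \chf s of $sG_1$ and $-sG_2$ with $G_1,G_2$ iid $\mathrm{gamma}(1,n)$. I would still present the representation-based proof as primary, since it constructs the variables $G_1,G_2$ explicitly and meshes with the companion representation in Lemma~\ref{lemma:representationSumLaplace}.
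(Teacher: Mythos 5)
Your proposal is correct and takes essentially the same route as the paper's proof, which likewise derives the result from the difference-of-exponentials representation in Eq.~\ref{eq:laplaceExp} together with the closedness of the gamma family under addition of the shape parameter. Your write-up merely makes explicit the independence bookkeeping and the Erlang identification that the paper leaves implicit.
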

\begin{proof}
    This is an immediate consequence of Eq.~\ref{eq:laplaceExp} and the closedness of the gamma family of distributions 
    with respect to the shape parameter.
\end{proof}

\begin{corollary}[Distribution of the average of $n$ Laplace variables]
The density of     $T$ is
\begin{equation}
    f_T(y) = \dfrac{\left(|y|/(2s)\right)^{n-1/2}}{\Gamma(n) \sqrt{\pi} s} K_{n-1/2}\left(\dfrac{|y|}{s}\right).
    \label{eq:fT}
\end{equation}
\label{corollary:pdf}
\end{corollary}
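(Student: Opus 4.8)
The plan is to read $f_T$ off the representation $T = s(G_1-G_2)$ from Lemma~\ref{lemma2}, since the difference of two i.i.d.\ $\mathrm{gamma}(1,n)$ variables has a density that collapses to a single standard special-function integral. An equivalent route goes through the generalized Laplace family via Lemma~\ref{lemma:representationSumLaplace}, and I will indicate that alternative at the end; the direct route has the advantage of being self-contained.

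\emph{Direct route.} First I would note that $T$ is a scale multiple of $D := G_1-G_2$, so it suffices to find $f_D$ and set $f_T(y)=s^{-1}f_D(y/s)$; and since $G_1,G_2$ are i.i.d., $D$ is symmetric about the origin, so only $u>0$ needs treatment. With $f_G(x)=x^{n-1}e^{-x}/\Gamma(n)$ the convolution gives
\[
 f_D(u) = \int_0^\infty f_G(u+w)\,f_G(w)\,dw = \frac{e^{-u}}{\Gamma(n)^2}\int_0^\infty w^{\,n-1}(u+w)^{\,n-1}e^{-2w}\,dw ,
\]
so the whole problem reduces to evaluating the last integral. The substitution $w=\tfrac{u}{2}(\cosh\phi-1)$ turns $w(u+w)$ into $(u/2)^2\sinh^2\phi$ and $e^{-2w}$ into $e^{u}e^{-u\cosh\phi}$, producing an integral of the form $\int_0^\infty e^{-u\cosh\phi}\sinh^{2n-1}\phi\,d\phi$, which is precisely the standard integral representation of $K_{n-1/2}(u)$ (e.g.\ 10.32.18 in \cite{NIST:DLMF}). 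Collecting the constants yields $f_D(u)=(u/2)^{n-1/2}K_{n-1/2}(u)/(\Gamma(n)\sqrt{\pi})$ for $u>0$; replacing $u$ by $|u|$ by symmetry and rescaling by $s$ gives Eq.~\ref{eq:fT}.

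\emph{Alternative route via GAL.} By Lemma~\ref{lemma:representationSumLaplace}, $T=\sqrt{2}\,s\,\sqrt{G_n}\,Z$ with $G_n\sim\mathrm{gamma}(1,n)$, which by Eq.~\ref{eq:representationGAL} (with $\mu=0$, $\sigma=\sqrt{2}\,s$, $\tau=n$) is the representation of a $\GAL(0,0,\sqrt{2}\,s,n)$ variable. Its density is then the general GAL pdf expressed through Bessel functions of order $\tau-1/2=n-1/2$ (Eq.~4.1.30 of \cite{kotz2001}), which reduces to Eq.~\ref{eq:fT} after simplification.

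\emph{Main obstacle.} In both routes the substantive step is the special-function bookkeeping rather than anything conceptual: in the direct route, choosing the hyperbolic substitution so that the integral lands exactly on the $K_{n-1/2}$ representation with the correct power $(u/2)^{n-1/2}$ and the factor $e^u$ cancelling $e^{-u}$; in the GAL route, keeping track of the normalization constant in the general Kotz et al.\ formula, which the excerpt records explicitly only in the $\tau=1/2$ case. As a consistency check, $n=1$ gives $K_{1/2}(z)=\sqrt{\pi/(2z)}\,e^{-z}$, so Eq.~\ref{eq:fT} collapses to $f_T(y)=\tfrac{1}{2s}e^{-|y|/s}$, in agreement with $T=s(W_1-W_2)\sim\CL(0,s)$ obtained from Eq.~\ref{eq:laplaceExp}.
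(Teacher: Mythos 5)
Your proposal is correct, and your primary (direct) route is genuinely different from the paper's. The paper proves the corollary by recognizing $T=s(G_1-G_2)$ as the difference-of-gammas representation of a generalized Laplace variable (Proposition~4.1.3 of \cite{kotz2001} with $\kappa=1$), concluding $T\sim \GAL(0,0,\sqrt{2}s,n)$, and then simply quoting the general GAL density formula (Eq.~4.1.30 of \cite{kotz2001}) to obtain Eq.~\eqref{eq:fT}; your ``alternative route'' is essentially this same argument, reached via the variance-mixture representation of Lemma~\ref{lemma:representationSumLaplace} instead of the difference-of-gammas one. Your direct route, by contrast, computes the convolution $f_D(u)=\Gamma(n)^{-2}e^{-u}\int_0^\infty w^{n-1}(u+w)^{n-1}e^{-2w}\,dw$ from scratch and evaluates it with the substitution $w=\tfrac{u}{2}(\cosh\phi-1)$, landing on the integral representation $K_\nu(z)=\sqrt{\pi}\,(z/2)^\nu\Gamma(\nu+\tfrac12)^{-1}\int_0^\infty e^{-z\cosh t}\sinh^{2\nu}t\,dt$ with $\nu=n-\tfrac12$; the constants do work out to $(u/2)^{n-1/2}K_{n-1/2}(u)/(\Gamma(n)\sqrt{\pi})$, and your $n=1$ sanity check is valid. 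What each approach buys: the paper's is shorter and embeds the result in the GAL framework used throughout the text, but it outsources the normalization to an external reference; yours is self-contained and effectively re-derives the Kotz et al.\ formula in the symmetric case, which is reassuring precisely because that normalization is the easiest thing to get wrong. One minor citation slip: the integral representation you invoke is DLMF 10.32.8, not 10.32.18 (the latter concerns products of Bessel functions); this does not affect the argument.
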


\begin{proof}
Another representation of a $U\sim \GAL(0,\mu,\sigma,\tau)$ is
\begin{equation}
    U = \dfrac{\sigma}{\sqrt{2}}\left( \dfrac{1}{\kappa}\, G_1 - \kappa\,G_2\right),
    \label{eq:representationGAL2}
\end{equation}
with $G_1$, $G_2$ two iid $\mathrm{gamma}(1,\tau)$ variables, see Proposition~4.1.3 of \cite{kotz2001},
and the associated $\kappa(\sigma)$ parameter 
simply becomes $\kappa=1$ for our application with $\mu=0$.
Now, this representation is such that 
$T\sim \GAL(0,0,\sigma=\sqrt{2}s,\tau=n)$. 
Accordingly, the
density is is given by Eq.~\ref{eq:fT},
which generalizes the density of the sum of $n$ iid Laplace variables in the presence of the $s\neq1$ parameter
\citep[see Eq. 2.3.25,][]{kotz2001}. 
\end{proof}

The results of Lemma~\ref{lemma2} and Corollary~\ref{corollary:pdf} lead to a simple representation of the density of
the sum and average of Laplace variables. In fact, with Eq.~\ref{eq:fT} containing a Bessel function of second kind and of semi--integer
order, the density can be written as a finite sum of elementary functions, see Eq.~\ref{eq:BesselSum}, as:
\begin{equation}
\begin{aligned}
f_T(y) = & \dfrac{\left(|y|/(2s)\right)^{n-1}}{2 s\,\Gamma(n)} e^{-|y|/s} \cdot \\
   & \sum\limits_{k=0}^{n-1} \dfrac{(n-1+k)!}{(n-1-k)! k!} \left(\dfrac{|y|}{s}\right)^{-k} \dfrac{1}{2^k}.
    \end{aligned}
    \label{eq:fTSum}
\end{equation}

The distribution of $\overline{X}_n$ for the average is immediately related to that of the
sum $T$ via
\[
f_{\overline{Y}_n}(y) = n\, f_T(n y),\, \forall y \in \mathbb{R},
\]
leading to simple analytic forms for the density of the average of $n$ Laplace variables with the same
parameter $s$. 
The first four densities are reported in Table~\ref{tab:Xn}, which generalize
the results of \cite{kotz2001}. It is clear that
$\E[\overline{Y}_n]=0$ and $\Var(\overline{Y}_n)=2 s^2/n$. The distributions of $\overline{Y}_n$ for $n=1$ through 4 are shown in Fig.~\ref{fig:Yn}, where for a given $n$ we chose Laplace distributions $\CL(0,s\sqrt{n})$, i.e.,  with an increasing variance such that asymptotically $\overline{Y}_n \overset{a}{\sim} N(0,2 s^2)$, according to the central limit theorem (see green curve in the figure). Of course, the average of $n$
Laplace distributions with fixed scale parameter $s$ and variance $2 s^2$ would converge to zero (i.e., 
per the law of large numbers), and it is not shown in the figure.

\begin{remark}
 It is clear that, within the family of averages of $n$ Laplace variables $\overline{Y}_n$
 with same variance, the best approximation to the Bessel distribution $K(\sigma)$ is provided for the case $n=1$, i.e., by the single Laplace variable $\CL(0,\sigma/\sqrt{2})$. 
\end{remark}

\begin{figure*}
    \centering
  \includegraphics[width=0.4\linewidth]{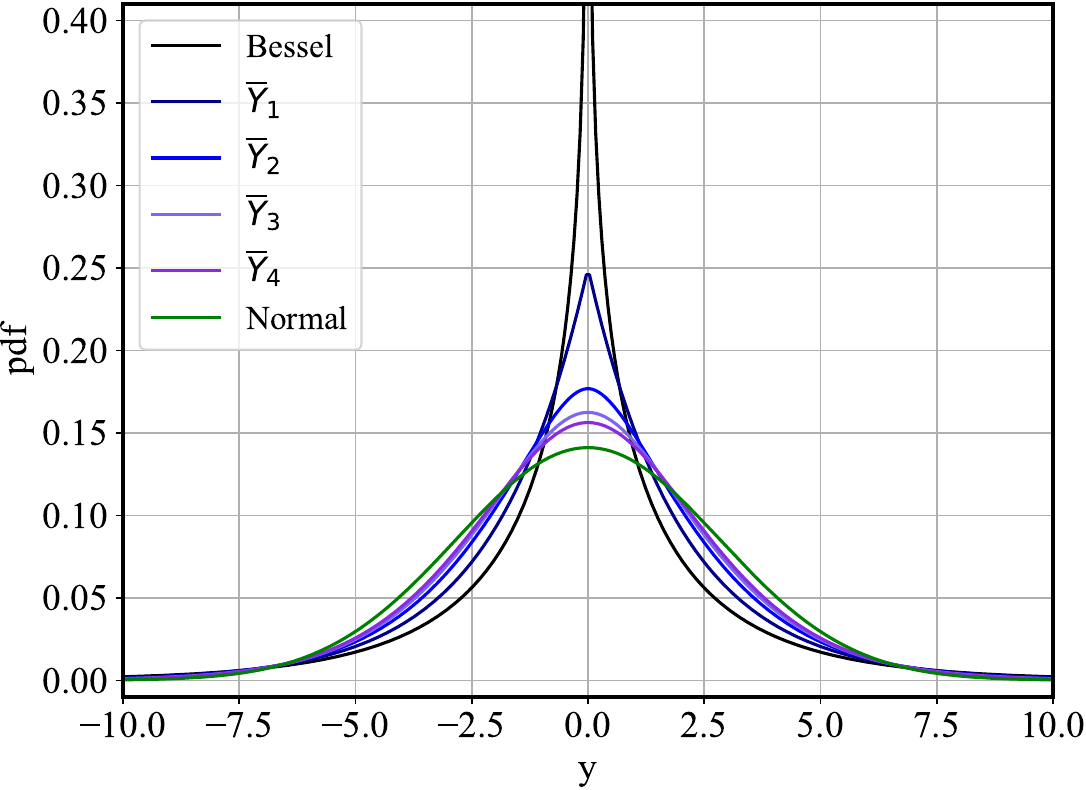}
  \includegraphics[width=0.4\linewidth]{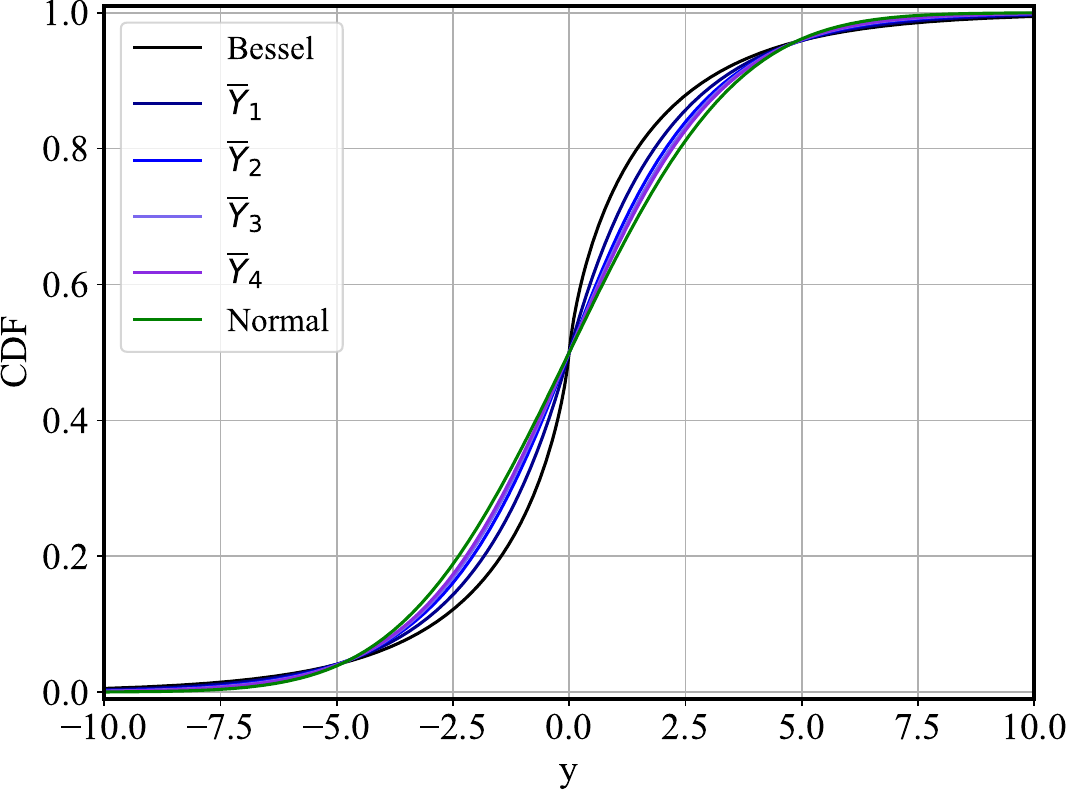}
    \caption{Distributions of the average $\overline{Y}_n$ of $n=$1, 2, 3, and 4 Laplace variables $\CL(0,s \sqrt{n})$ with $s=2$, the Bessel distribution $K(\sigma=s \sqrt{2})$ and the normal distribution $N(0, 2 s^2)$, all of same
    variance. }
    \label{fig:Yn}
\end{figure*}

\begin{table*}[]
    \centering
    \begin{tabular}{c|lcc}
    \hline
    \hline
      n   &  Density of $\overline{Y}_n$ \\
      \hline
       1  &  $f(y) = \dfrac{1}{2 s} e^{-|y|/s}$ \\[10pt]
       2  &  $f_2(y) = \dfrac{1}{2 s} e^{-2|y|/s} \left(1 + 2\dfrac{|y|}{s} \right)$ \\[10pt]
       3 &   $f_3(y) = \dfrac{9}{16 s} e^{-3|y|/s} \left(1 + 3\dfrac{|y|}{s} + 3 \left(\dfrac{|y|}{s} \right)^2\right)$ \\[10pt]
       4 &   $f_4(y) = \dfrac{1}{24 s} e^{-4|y|/s} \left(15 + 60\dfrac{|y|}{s} 
       + 96 \left(\dfrac{|y|}{s} \right)^2 +64\left(\dfrac{|y|}{s} \right)^3 \right)$ \\[10pt]
       \hline
         \hline
    \end{tabular}
    \caption{Probability density function of the average of $n$ Laplace variables, i.e. the  $\overline{Y}_n$ variable for $n=1, 2, 3, 4$,
    according to Eq.~\ref{eq:fTSum}.}
    \label{tab:Xn}
\end{table*}

\begin{figure*}
  \centering
  \includegraphics[width=0.4\linewidth]{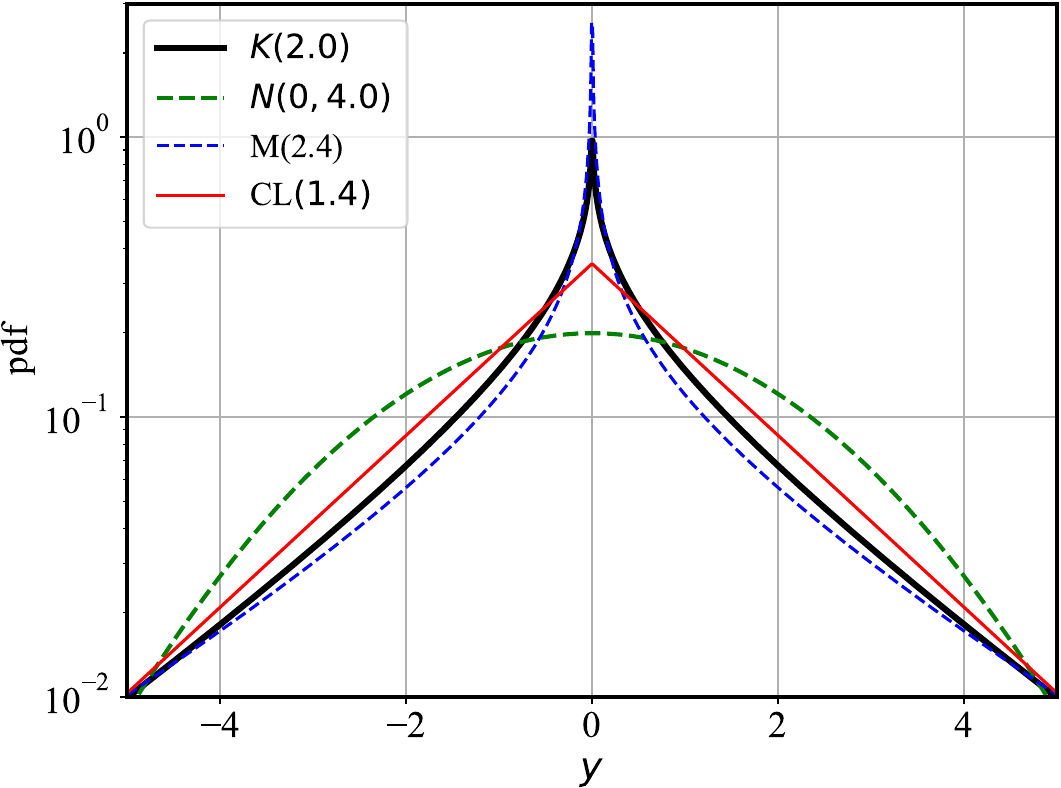}
  \includegraphics[width=0.4\linewidth]{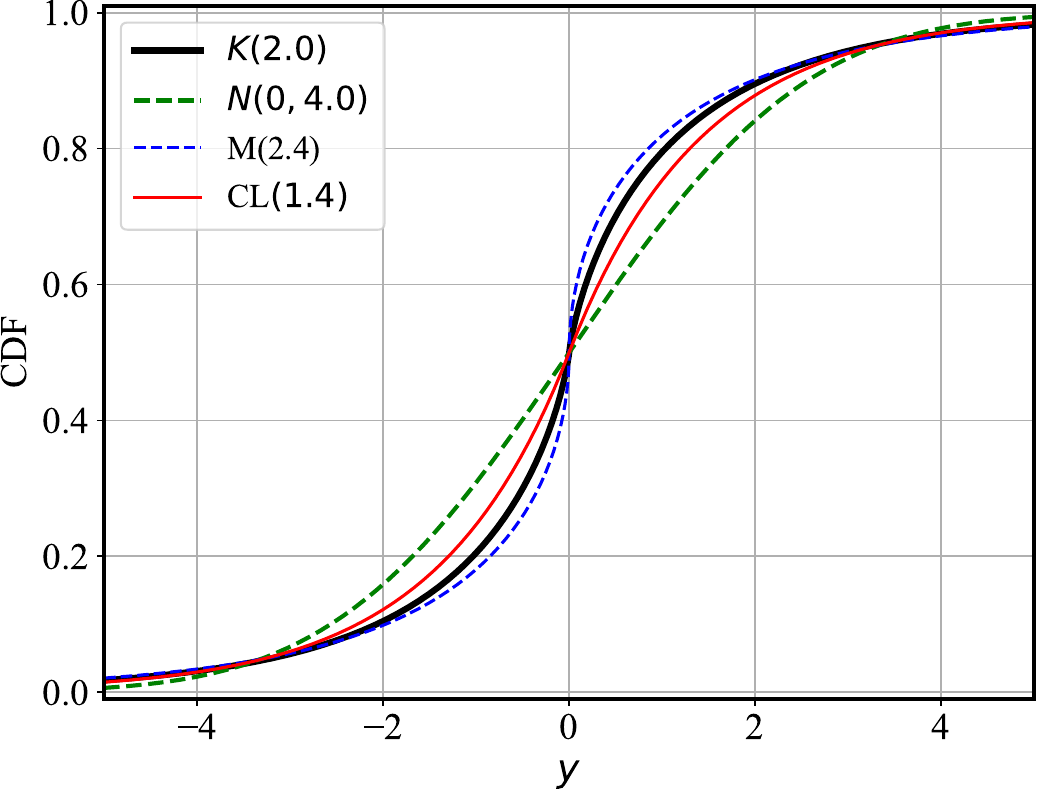}
  \caption{Probability distribution function and  CDF of the Bessel distribution $K(\sigma=2)$ according to Eq.~\ref{eq:fK}.
   For comparison, a Laplace distribution $CL(0,\sigma/\lambda)$ with $\lambda=\sqrt{2}$, a normal distribution
  $N(0,\sigma^2)$ and a distribution based on the Martin--Maas $M(s=2.4)$
  approximation are also shown.}
  \label{fig:K}
\end{figure*}

\section{Simple approximations of the Bessel distributions}
\label{sec:approximations}
Based on the theoretical results described in Sec.~\ref{sec:compounding} and \ref{sec:sumLaplace}, 
simple approximations
for the Bessel distribution are now discussed and tested. These approximations can be useful in data science applications where
a tractable density and overall simple analytic properties are convenient.

\subsection{The normal approximation}
A normal distribution of same mean and variance as the $K(\sigma)$ is shown in Fig.~\ref{fig:K},
illustrating that the normal distribution $N(0, \sigma^2)$ is generally a poor approximation to the Bessel distribution.
 The cross--over points between the two distributions,
illustrated in Fig.~\ref{fig:K}, also show that the Bessel distribution has harder tails than the 
normal distribution of same mean and variance. This is in fact a practical  feature that
makes the Bessel distribution useful in certain applications where harder--than--normal tails are required \citep[e.g.][]{park2022,saralees2007,progri2016}.

\begin{remark}
    The normal distribution is equivalent to ignoring the compounding that leads to the
singularity in the Bessel distribution at $y=0$, and it is therefore not recommended as an approximation for the Bessel distribution. As also discussed in Sec.~\ref{sec:sumLaplace}, a
normal  distribution $N(0,2 \sigma^2)$ is naturally the asymptotic distribution of the 
average of $n$ 
 iid $\CL(0,\sigma \sqrt{n})$ variables, according to the central limit theorem \citep[e.g., Sec. 17.4 of][]{cramer1946}.
\end{remark}

\subsection{The zero--order power--series approximation, i.e., the Martin--Maas distribution}
\label{sec:martin}
It is possible to describe the modified Bessel function of the second kind and of order zero
via a power series. For example, the recent work by \cite{martin2022} provides
a power series that is suitable as an approximation; retaining the first term in their expansion,
a simple approximation is provided by
\[ K_0(x) \simeq \sqrt{\dfrac{\pi}{2 x}}\; e^{\displaystyle -x}\; \text{ for } x\geq0.
\]
This approximation can be used in Eq.~\ref{eq:fK} using a scale parameter $s$,   and normalized to unity
to yield the following density:
\begin{equation}
f_M(y)=\dfrac{1}{2 \sqrt{\pi s}}\, \dfrac{e^{-|y|/s}}{\sqrt{|y|}}, \; \text{ for } y\in \mathbb{R}.
\end{equation}
This pdf is referred to as the \emph{Martin--Maas} approximation to the modified Bessel distribution,
and the corresponding family of distributions is indicated as $M(s)$.~\footnote{There appears
to be no previously named distribution of this form in the literature. This distribution could also be referred to as the \emph{inverse-square-root modified exponential} distribution.}
This approximation is shown in Fig.~\ref{fig:K} as a dashed blue curve. Despite its singularity at $y=0$, this
approximation has a convenient closed form for its CDF:
\begin{equation}
  F_{M}(y) = \begin{cases}
    \dfrac{1}{2} \left(1 + \erf(\sqrt{y/s}) \right)\; \text{ for } y\geq 0\\[5pt]
	1 - \dfrac{1}{2} \left(1 + \erf(\sqrt{-y/s}) \right)\; \text{ for } y< 0
  \end{cases}
\end{equation}
where $\erf(x)$ is the usual error function. This CDF is also illustrated in
Fig.~\ref{fig:K} as a dashed blue curve. From the form of its CDF, it is evident that $s$ is a genuine scale parameter
for this family of distributions.

The \mgf\ for the distribution is 
\begin{equation}
    M_M(t)=\dfrac{1}{2\sqrt{s}} \left( \dfrac{1}{\sqrt{t+1/s}} +  \dfrac{1}{\sqrt{1/s-t}}\right),
\end{equation}
which can be easily obtained via direct integration of $\E[e^{tY}]$ with the aid of elementary functions.
The \mgf\ shows that $\E[X]=0$, as expected, and $\Var(X)=\nicefrac{3}{2}\, s^2$. Compared
to a Bessel distribution $K(s)$ of same parameter, the Martin--Maas distribution $M(s)$ has
a larger variance for the same scale parameter.

\subsection{The Laplace or double--exponential approximation}
\label{sec:distance}
An approximation that overcomes the singularity at the origin is provided by the 
Laplace distribution, as discussed in Secs.~\ref{sec:laplace} and \ref{sec:sumLaplace}.
For the purpose of applications, it may be convenient to re-parameterize the $CL(0,\sigma)$
distributions as $CL(0,\sigma/\lambda)$, i.e., with a density 
    \begin{equation}
  f_{L}(x) = \dfrac{
  \lambda
  }{2 \sigma} e^{\displaystyle - \dfrac{\lambda\, |x|}{\sigma}}, \; \text{ for } x\in \mathbb{R}
  \label{eq:laplaceLambda}
\end{equation}
where $\lambda$ is an ancillary parameter that is completely degenerate with $\sigma$.
This $\lambda$ parameter
can be introduced to adjust the approximating Laplace distribution, when comparing it to a $K(\sigma)$
Bessel distribution.
The associated CDF also has a simple analytic form,
\begin{equation}
  F_{L}(y) =  \begin{cases}
   1-\dfrac{1}{2}\, e^{\displaystyle - \frac{ \lambda }{\sigma} y} \; \text{ for } y\geq 0\\[10pt] 
    \dfrac{1}{2}\, e^{\displaystyle + \frac{ \lambda }{\sigma} y} \; \text{ for } y< 0.
  \end{cases}
\end{equation}
As illustrated in Fig.~\ref{fig:K}, the Laplace distribution has 
the advantages of a peaked shape at the origin that mimics the Bessel distribution but without the singularity, and the flexibility of adjusting of
the scale parameter to match certain properties of the Bessel distribution (see Sec.~\ref{sec:application} for further discussion). This distribution is therefore suggested as a possible approximation to the Bessel distribution for certain applications where a simple distribution is preferable.


As discussed in Sec.~\ref{sec:laplace}, a value $\lambda=\sqrt{2}$ results in a Laplace
variables with variance $\sigma^2$, same as a $K(\sigma)$ Bessel distribution it intends to 
approximate. The parameter $\lambda$, however, can be chosen empirically, based on the
intended application.
 For example, the choice of $\lambda=1.5$ was suggested in \cite{bonamente2025b} for the purpose
 of approximating $p$--values of the distribution for hypothesis testing.

It is useful to determine in a more formal way a quantitative "distance" between 
the probability distributions of two $K(\sigma)$ and 
$\CL(0,\sigma/\lambda)$ variables, and how the parameter $\lambda$ can
be chosen to minimize such distance. For this purpose two distances were used: a
Kolmogorov--Smirnov distance \citep[e.g.][]{kolmogorov1933} defined as
\begin{equation}
    D_{KS} = \sup_{x \in \mathbb{R}} | F_L(x) - F_K(x) |,
\end{equation}
and a Wasserstein
distance \citep[e.g.][]{villani2008} defined as
\begin{equation}
    D_W = \int_{\mathbb{R}} | F_L(x) - F_K(x) | dx.
\end{equation}
Distances between distributions are commonly used to test the agreement 
between a sample distribution and a parent distribution, or the agreement between 
two sample distributions
\citep[e.g.][]{kolmogorov1933, smirnov1939, anderson1952, darling1957,  anderson1962}; for
recent reviews, see e.g. \cite{ramdas2017, hallin2021}. In this application, 
the same measures are used but for two fully specified
parent distributions, with the goal of finding the parameter $\lambda$ that minimizes the
distance between the two distributions.

For this purpose, several values of the $\sigma$
parameter in the range $\sigma=0.1$ to 10 were considered, and 
a simple numerical least-squares method was used
to find the best-fit value of $\lambda$ that minimizes the distances $D_{KS}$ and $D_W$.
For the Kolmogorov-Smirnov distance, it was found that $\lambda_{KS}=1.83$ 
minimizes the distance between the two distribution, for a distance 
$D_{KS}=0.0253$ for all values of $\sigma$. This means that with this choice of the
ancillary parameter $\lambda$, the two CDFs differ by less than approximately 0.025 
for all values of the random variable. For the Wasserstein distance,
it was found that $\lambda_W=1.54$ minimizes the distance, for all values of $\sigma$.
Moreover, the Wasserstein distance for this value of $\lambda$ is such that $D_W/\sigma=0.083$ for all values of $\sigma$. ~\footnote{There is no immediate interpretation of the Wasserstein distance
in terms of difference in the values of the distributions, as is the
case for the Kolmogorov-Smirnov distance. Rather, the Wasserstein
distance is often referred to as the earth-mover's distance \citep[e.g., based on the
Kantorovich-Rubinstein theorem][]{kantorovich1958space,hanin1992}, and it can be 
interpreted
as the minimum amount of work required to transform one distribution into the other.}


\begin{figure*}
    \centering
    \includegraphics[width=0.6\linewidth]{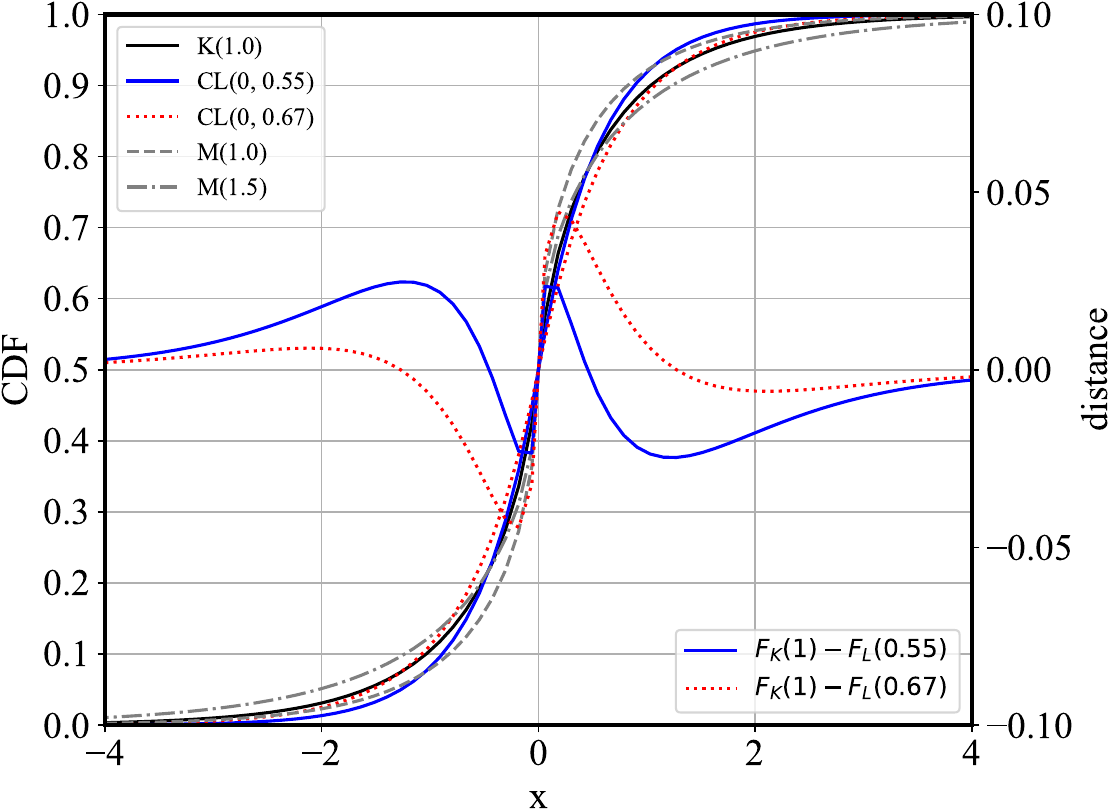}
    \caption{CDF of $K(\sigma)$ for $\sigma=1$, and two Laplace distributions with scale 
    parameter $\sigma/\lambda$, with two choices of $\lambda_W=1.54$ and $\lambda_{KS}=1.83$.
    Wasserstein and Kolmogorov-Smirnov distances are also reported in the right-hand vertical axis.}
    \label{fig:distance}
\end{figure*}

The findings are summarized in Fig.~\ref{fig:distance}, using $\sigma=1$ as a reference and
without loss of generality due to the scale-family nature of the distributions. 
First, the fact that the same distance parameter $\lambda$ was found, and the same
distance (scaled by the parameter $\sigma$ for the integral Wasserstein distance) for
all values of the parameter $\sigma$, are precisely 
a result of the fact that both the Laplace and the Bessel distributions are scale 
families. 
Second, the maximum distance between the two cumulative distributions can
be kept at the percent-level for the entire range of values of the random variable.

Fig.~\ref{fig:distance} also shows that the 
differences in the CDF of the $CL(0,\sigma/\lambda)$, compared to the target $K(\sigma)$,
depend on the choice of $\lambda$. For the $\lambda_W$ value obtained from the Wasserstein distance
optimization, the largest differences occur near the median, and in larger magnitude than when
$\lambda_{WD}$ is used. However, for larger (and smaller) quantiles, $\lambda_W$ results
lead to a distribution that is
in a better agreement, compared to the use of $\lambda_{WD}$. It is therefore suggested that
the choice of $\lambda$ depend on the application at hand. An application that illustrates this
process of analysis is provided in  the next section.

\section{Application to hypothesis testing}
\label{sec:application}
The methods discussed in this paper are now used to approximate quantiles of the Bessel distribution,
which are a key component of hypothesis testing. 
Given that hypothesis testing is a fundamental process for a variety of statistics and data science 
applications \citep[e.g.][]{lehmann2022}, accurate and efficient calculation of quantiles is
of broad relevance for practical applications.
For this purpose, a $K(\sigma)$ Bessel
distribution with $\sigma=1$ is used as a reference distribution, and a choice of 
parameters for the $\CL(0,\sigma/\lambda)$ Laplace distribution and for the
$M(s)$ Martin-Maas distribution are used to evaluate the $p=0.5$ (median), 0.683, 0.9, 0.95 and 0.99 quantiles, which correspond to critical values of same probability for one--sided hypothesis
testing. Given that the Bessel distribution is symmetric about the origin, identical considerations
are used to calculate $p \leq 0.5$ quantiles, which can be used for two--sided hypothesis testing.

Results are shown in Table~\ref{tab:quantiles}, with some of the distributions 
illustrated also in Fig.~\ref{fig:distance}.
\begin{table*}[]
    \centering
    \begin{tabular}{l|lllccllllc}
    \hline
    \hline
$\alpha$ & $K(\sigma=1)$ & \multicolumn{5}{c}{$\lambda$ of Laplace distribution $\CL(\sigma/\lambda)$} &   \multicolumn{3}{c}{$s$ of Martin--Maas distribution $M(s)$} &  \\
    &          & 1.0 & $\sqrt{2}$ & $1.54$ & $1.83$ & 2 & 1 & 1.2 & 1.5 \\
\hline
(prob.)    & \multicolumn{8}{c}{($1-\alpha$)-quantiles or one-sided critical values with probability to exceed $\alpha$} & \\[5pt]
    \hline
0.500 &  0.00 &  0.00 (0) &  0.00 (0) &  0.00 (0) &  0.00 (0) &  0.00 (0) &     0.00 (0)&  0.00 (-0)&  0.00 (0)&  \\
0.317 &  0.22 &  0.46 (111) &  0.32 (49) &  0.30 (37) &  0.25 (15) &  0.23 (5) &     0.11 (-48)&  0.14 (-37)&  0.17 (-21)&  \\
0.100 &  1.03 &  1.61 (56) &  1.14 (10) &  1.05 (1) &  0.88 (-15) &  0.80 (-22) &     0.82 (-21)&  0.99 (-5)&  1.23 (19)&  \\
0.050 &  1.60 &  2.30 (44) &  1.63 (2) &  1.50 (-6) &  1.26 (-21) &  1.15 (-28) &     1.35 (-15)&  1.62 (2)&  2.03 (27)&  \\
0.010 &  2.98 &  3.91 (31) &  2.77 (-7) &  2.54 (-15) &  2.14 (-28) &  1.96 (-34) &     2.71 (-9)&  3.25 (9)&  4.06 (36)&  \\
        \hline
        \hline
    \end{tabular}
    \caption{$(1-\alpha)$--quantiles, or one--sided critical values with probability $\alpha$ to exceed ($x=F^{-1}(1-\alpha)$) of the Bessel, Laplace and Martin--Maas 
    distributions, for representative choices of the probability and of the parameters of the
    distributions. In parenthesis are the percent deviations from values of the Bessel distribution.}
    \label{tab:quantiles}
\end{table*}
It is found that for the $1-p=\alpha=0.1$, 0.05  and 0.01 residual probabilities,  also  90\%, 95\% and 99\% one-sided critical values, a choice of $\lambda \simeq 1.5$ provides
the most accurate approximation of the quantiles when using the Laplace distribution.
For the Martin--Maas distribution, a value of the scale parameter  $s \simeq 1.2$
is the most appropriate to estimate these quantiles; this value corresponds
to a distribution with larger variance than the approximating Bessel distribution (see Sec.~\ref{sec:martin}). For both distributions, it is
found that those quantiles can
be approximated to better than 10\% accuracy, compared to the reference $K(\sigma)$ distribution.
The Laplace distribution has the advantage of having both pdf and CDF with simple closed forms,
and without the singularity at the origin in its density, providing
 a computational advantage for numerical applications.

\begin{remark}
The analysis reported in this section shows that the Laplace distribution is
a suitable approximation for the Bessel distribution for use  in hypothesis testing,
with a value of the ancillary parameter $\lambda \simeq 1.5$ being preferred
when the goal is the estimate of large (e.g., $p \geq 0.9$) quantiles. This value 
is also similar to the $\lambda=\sqrt{2}$ value that results in the same variance for the 
two distributions (see Sec.~\ref{sec:laplace}), and to the $\lambda_W=1.54$ value
that minimizes the Wasserstein distance (see Sec.~\ref{sec:distance}). On the other hand,
for estimates of the distribution near the median, a larger value of $\lambda$ might be
appropriate, e.g. a value that is near $\lambda_{KS}=1.83$ which minimizes the Kolmogorov-Smirnov distance between the two distributions.
\end{remark}

 A notable application of the Bessel distribution is for the
treatment of systematic errors in regression problems with Poisson data \citep[][]{bonamente2023, bonamente2025a}, which has motivated the author's interest in this distribution. In that application,  the statistic that was used to assess
the significance of a nested model component \citep[e.g.][]{cash1976, cash1979} is obtained
as the convolution of a $\chi^2_{\nu}$ distribution with a $K(\sigma)$ distribution. When
the Bessel distribution is approximated with a Laplace, it is in fact possible to
provide a closed form for the pdf of the statistic under the null hypothesis, with
advantages in both computational time and ease of interpretation of the regression \citep{bonamente2025b}. There is therefore a significant incentive for the use
of a simple approximation to the Bessel distribution, such as the one provided 
by the Laplace distribution. The results of this paper show that a choice of $\lambda \simeq 1.5$
for the approximating Laplace distribution is reasonable, as was in fact suggested in that
application.

\section{Conclusions}

The Bessel distribution  results from the compounding of two zero-mean normal distributions,
and it is therefore  of use in a variety of statistics applications \citep[e.g.][]{laha1954, mcleish1982, gorska2022, bonamente2025b}. One of the limitations
to the use of the Bessel distribution for practical data science applications is the singularity at the origin, and accordingly
a variety of approximations have been proposed in the literature \citep[for recent results, see e.g.][]{macias2020, martin2022}.
This paper
 has reviewed key properties of the Bessel distribution, presented a simple closed form for
 its CDF via modified Bessel and Struve functions, and investigated and tested approximations
via the Laplace distribution (also known as double-exponential) and with other distributions. 

This paper has shown that 
a suitable and statistically-motivated approximation to a Bessel variable
$K(\sigma)$ is the Laplace variable $\CL(0,\sigma/\sqrt{2})$, also referred to as the
double-exponential distribution. In fact, such Laplace distribution
results from the compounding of a zero-mean normal distribution with a $\chi^2_2$-distributed variance,
instead of a $\chi^2_1$-distributed variance for the Bessel distribution. Moreover, the two distributions
have the same mean and variance, and both belong to a scale-family of distributions. 
It was also shown that it is not possible to represent a Bessel distribution as the sum or average of
any number of Laplace distributions, and therefore the single-Laplace approximation appears a 
simple and viable alternative with a density that only uses elementary functions.

For hypothesis testing, which is a common task in statistics and in
data science applications, it was shown that the Laplace
distribution can approximate the Bessel distribution with an accuracy of better than 10\% for
the calculation of typical quantiles. For this class of applications, it was shown that it is
convenient to adjust the scale parameter of the Laplace distribution by means
of an auxiliary parameter $\lambda$ such that a $K(\sigma)$ variable
is approximated by a $\CL(0,\sigma/\lambda)$ variable.
To minimize the difference between the exact and the approximating quantiles, it was shown 
that the auxiliary parameter
$\lambda$ can typically be chosen as $\lambda \simeq 1.5$, which also corresponds to
the minimum Wasserstein distance between the Bessel and Laplace distributions.

The closed form for both the pdf and the CDF of
the Laplace distribution are also of advantage for other applications, such as 
the ability to evaluate analytically certain convolution integrals, as in the
case of a model for the goodness-of-fit statistic in the presence of systematic errors
in the regression of integer-count data
\citep[e.g.][]{bonamente2025b}. Another class of applications that may benefit from
these approximations is to the modeling of stock and option pricing, which was in
fact one of the applications for which the variance gamma distribution was initially
proposed \citep{madan1990}. Given that the Bessel distribution is a particular case
of the variance gamma \citep[e.g.][]{kotz2001, madan1990, fischer2023}, certain
applications that use zero-mean variance-gamma distributions \citep[e.g.][]{hoyyi2021, drahokoupil2020}
may benefit from the approximations described in this paper.
It is therefore believed that the representations
and approximations of the Bessel variable presented in this paper can be of use
in a variety of practical applications, where simple analytic tools are preferrable
to the complexity of the Bessel function.

\begin{appendix}
\section{Appendix}
\subsection{Distributions and useful mathematical formulae}
\label{app0}

A random variable is said to have a \textit{gamma} distribution $\gammaD(\alpha,r)$ with rate parameter $\alpha$ and shape parameter $r$
if it follows the density distribution
\begin{equation}
    f_{\gamma}(x) = \dfrac{\alpha^r x^{r-1}}{\Gamma(r)} e^{-\alpha x},
\end{equation}
and with $\E[X]=r/\alpha$ and $\Var(X)=r/\alpha^2$. 
A special case is the \textit{chi--squared} $\chi^2_{\nu}$ variable,
which is gamma variable $\gammaD(\alpha=\nicefrac{1}{2},r = \nicefrac{\nu}{2})$ with density
\begin{equation}
    f_{\chi^2}(x) = \dfrac{x^{\nu/2-1}}{\Gamma(\nicefrac{\nu}{2}) 2^{\nu/2}} e^{-x/2}.
\end{equation}
Another special case is the \emph{exponential} variable $\Exp(\alpha)$, which is a gamma variable $\gammaD(\alpha, 1)$, with density
\begin{equation}
    f_E(x) = \alpha e^{-\alpha x}.
\end{equation}

The Bessel function of second kind and of semi--integer order has the following representation as a 
sum of elementary functions:
\begin{equation}
K_{r+\nicefrac{1}{2}}(x) = \sqrt{\dfrac{\pi}{2 x}} e^{-x} \sum\limits_{k=0}^r \dfrac{(r+k)!}{(r-k)! k!} (2  x)^{-k},
    \label{eq:BesselSum}
\end{equation}
(see e.g. 8.468 of \cite{gradshteyn2007}).

\subsection{Review of compounding of distributions}
\label{app1}
This appendix provides a review of elementary methods for the compounding of two random
variables. It begins with a simple general result for the distribution of the product
$Z=X\,Y$ of two random variables, and then provides results for the case of normal
distributions of direct relevance to this paper.

\begin{lemma}[Distribution of products of random variables]
\label{lemma1}
Let $X$ and $Y$ be two  continuous random variables with joint pdf $h(x,y)$. The
distribution of $Z=X\,Y$ is
\begin{equation}
    f_Z(z)=\int_{-\infty}^{\infty} h(x,z/x)\,\dfrac{dx}{|x|}
    \label{eq:fCompoundGeneral}
\end{equation}
\citep[see, e.g.,][p.141]{rohatgi1976}. 
\end{lemma}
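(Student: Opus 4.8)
The plan is to derive Eq.~\ref{eq:fCompoundGeneral} by the standard auxiliary-variable technique for the distribution of a product. First I would introduce the change of variables $(x,y) \mapsto (z,w)$ with $z = x y$ and $w = x$, whose inverse is $x = w$, $y = z/w$. On each of the two open half-planes $\{x > 0\}$ and $\{x < 0\}$ this map is a smooth diffeomorphism onto its image, and the excluded line $\{x = 0\}$ carries no probability mass because $X$ is a continuous random variable, so it may be discarded without affecting any integral.

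Next I would compute the Jacobian of the inverse transformation. From $x = w$ and $y = z/w$ one gets
\[
\frac{\partial(x,y)}{\partial(z,w)} = \det \begin{pmatrix} 0 & 1 \\ 1/w & -z/w^2 \end{pmatrix} = -\frac{1}{w},
\]
so the modulus of the Jacobian equals $1/|w|$. The change-of-variables formula for densities then gives the joint pdf of $(Z,W)$ as $g(z,w) = h(w, z/w)/|w|$ on each half-plane, hence almost everywhere on the plane.

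Finally I would recover $f_Z$ as the marginal density of $Z$ by integrating $g$ over the auxiliary variable:
\[
f_Z(z) = \int_{-\infty}^{\infty} g(z,w)\, dw = \int_{-\infty}^{\infty} h\!\left(w, \tfrac{z}{w}\right) \frac{dw}{|w|},
\]
and renaming the dummy variable $w$ to $x$ yields the claimed formula. An equivalent route, which avoids an explicit appeal to the multivariate change-of-variables theorem, is to start from $F_Z(z) = P(XY \le z)$, split the $(x,y)$-plane according to the sign of $x$ so that the event $\{XY \le z\}$ becomes $\{Y \le z/x\}$ for $x>0$ and $\{Y \ge z/x\}$ for $x<0$, and then differentiate under the integral sign via the Leibniz rule; this reproduces the same integrand.

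The only genuinely delicate points in either version are the regularity conditions: the measure-zero exclusion of $\{x = 0\}$, the integrability of $x \mapsto h(x, z/x)/|x|$, and --- in the CDF approach --- the interchange of differentiation and integration. These are precisely the hypotheses packaged into the standard references, so in the paper's context it suffices to invoke Rohatgi, as the statement already does; the algebra itself is entirely routine. I would expect the main (and only modest) obstacle to be stating these regularity conditions cleanly rather than any computational difficulty.
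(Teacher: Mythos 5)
Your proposal is correct. Your primary argument --- introducing the auxiliary variable $W=X$, computing the Jacobian of $(x,y)\mapsto(z,w)=(xy,x)$ to get the joint density $g(z,w)=h(w,z/w)/|w|$, and then marginalizing over $w$ --- is a genuinely different route from the paper's, which instead writes $F_Z(z)=P(XY\le z)$, splits the double integral over the sign of $x$ (so the event becomes $\{Y\le z/x\}$ for $x>0$ and $\{Y\ge z/x\}$ for $x<0$), and differentiates to recover the density; that CDF route is exactly the ``equivalent route'' you sketch in your penultimate paragraph. The Jacobian computation is right ($\det = -1/w$, so the modulus is $1/|w|$), and your handling of the line $\{x=0\}$ as a null set is the appropriate care. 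The transformation approach buys a cleaner derivation that avoids differentiating under the integral sign and produces the absolute value $1/|x|$ automatically from the Jacobian modulus, at the cost of invoking the multivariate change-of-variables theorem; the paper's CDF approach is more elementary in its toolkit but has to track the orientation of the inner integral's limits by hand when $x<0$, which is where the sign bookkeeping (and the absolute value) actually comes from. Both correctly defer the regularity hypotheses to the standard reference, as the lemma's citation of Rohatgi intends.
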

\begin{proof}
    This is immediately seen by
\[ F_Z(z)=P(Z \leq z) = P(Y\leq z/x\; | \;  X=x);
\]
separating the probabilities according to the sign of $x$, to give
\[
\begin{aligned}
F_Z(z) = & \int_{0}^{\infty} f_X(x) dx \int_0^{z/x} f_Y(y) dy + \\
& \int_{- \infty}^0
f_X(x) dx \int_{z/x}^{-\infty} f_Y(y) dy. 
\end{aligned}
\]
From this, using a change of variables and the fundamental theorem of calculus, Eq.~\ref{eq:fCompoundGeneral} is obtained. 
\end{proof}
Lemma~\ref{lemma1} is the most general case for the
product of two continuous random variables. The application to independent
variables follows.

\begin{corollary}[Distribution of product of independent variables]
\label{corollary1}
Let $X$ and $Y$ be independent continous random variables respectively with 
pdf  $f_X(x)$ and $f_Y(y)$. Then the distribution function of $Z=X\,Y$ is
\begin{equation}
    f_Z(z) =\int_{-\infty}^{\infty} f_X(x)\, f_Y(z/x) \, \dfrac{dx}{|x|}
    \label{eq:corollary1}
\end{equation}
\end{corollary}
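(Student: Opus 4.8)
The plan is to obtain the corollary as an immediate specialization of Lemma~\ref{lemma1}. First I would invoke independence: for two independent continuous random variables $X$ and $Y$, the joint density factors as $h(x,y) = f_X(x)\,f_Y(y)$ for all $(x,y)\in\mathbb{R}^2$. Evaluating this factorization along the hyperbola $y = z/x$ gives $h(x, z/x) = f_X(x)\,f_Y(z/x)$.

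Next I would substitute this into the general product formula of Lemma~\ref{lemma1},
\[
f_Z(z) = \int_{-\infty}^{\infty} h(x, z/x)\,\frac{dx}{|x|},
\]
which yields Eq.~\ref{eq:corollary1} directly. The only point that deserves a word is the behavior of the integrand at $x=0$: the factorization holds on all of $\mathbb{R}^2$, and $\{x=0\}$ has Lebesgue measure zero, so it does not affect the value of the integral — this is already built into the statement of Lemma~\ref{lemma1}.

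I do not expect any genuine obstacle here; the corollary is a one-line consequence of the lemma. If one preferred a self-contained argument, the same result follows by repeating the proof of Lemma~\ref{lemma1} verbatim with $f_X(x)\,f_Y(y)$ in place of $h(x,y)$ throughout: write $F_Z(z)=P(XY\le z)$, split the probability according to the sign of $x$ into the two iterated integrals $\int_0^\infty f_X(x)\,dx\int_0^{z/x} f_Y(y)\,dy$ and $\int_{-\infty}^0 f_X(x)\,dx\int_{z/x}^{-\infty} f_Y(y)\,dy$, change variables in the inner integral, and differentiate in $z$ using the fundamental theorem of calculus.
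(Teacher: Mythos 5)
Your proposal is correct and matches the paper's proof, which likewise obtains the corollary by factoring the joint density $h(x,y)=f_X(x)f_Y(y)$ under independence and substituting into Lemma~\ref{lemma1}. The extra remarks about the measure-zero set $\{x=0\}$ and the optional self-contained rerun of the lemma's argument are fine but not needed.
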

\begin{proof}
This is an immediate consequence of Lemma~\ref{lemma1} and the independence
of random variables.
\end{proof}

The compounding of two random variables are zero--mean normal distributions
can be described as the product of two random variables. This notation makes for an easy interpretation of the
operation of compounding, and it is presented as the following lemma.

\begin{lemma}[Representation of compounding of zero-mean normal distributions]
Let $Y\sim N(0,X^2)$ with $X\sim N(0,\sigma^2)$, with $\sigma^2$ a non--negative number, i.e., the $Y$ variable is the result of the compounding of two zero--mean
normal variables. Then $Y = \sigma Z_1 Z_2$ is an equivalent representation
for the compounded distribution, with $Z_1$ and $Z_2$ two independent
standard normal variables.
\label{lemma3}
\end{lemma}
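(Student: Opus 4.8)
The plan is to obtain $Y = \sigma Z_1 Z_2$ directly from the defining hierarchy $X \sim N(0,\sigma^2)$, $Y\mid X \sim N(0,X^2)$ by exhibiting an explicit coupling on a common probability space. First I would take two independent standard normal variables $Z_1$ and $Z_2$ and set $X \coloneqq \sigma Z_1$; since scaling a standard normal by the constant $\sigma$ produces an $N(0,\sigma^2)$ variable, this $X$ has the prescribed marginal distribution. Next, conditioning on the value $X = x$ (equivalently, on $Z_1$), the quantity $\sigma Z_1 Z_2 = x\,Z_2$ is a deterministic multiple of the independent standard normal $Z_2$, hence is distributed as $N(0,x^2)$. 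Therefore the pair $(X,\,\sigma Z_1 Z_2)$ realises exactly the stated compound hierarchy, and in particular $\sigma Z_1 Z_2$ has the same distribution as $Y$.

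As a consistency check I would also verify the claim at the level of densities: writing the compound density by conditioning gives $f_Y(y) = \int_{-\infty}^{\infty} f_{Y\mid X}(y\mid x)\,f_X(x)\,dx$, which is evaluated in Eq.~\ref{eq:fK} as $\tfrac{1}{\pi\sigma}K_0(|y/\sigma|)$; on the other hand, Corollary~\ref{corollary1} applied to the product of two independent standard normals, followed by rescaling by $\sigma$, yields the same Bessel form, confirming the representation. I would present the coupling argument as the main proof, since it is transparent and avoids evaluating the Bessel integral.

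The proof is essentially routine, and there is no substantial obstacle. The only point needing care is the elementary but easy-to-overlook observation that $N(0,X^2)$ depends on $X$ only through $X^2$, so that the sign of the intermediate variable is immaterial (here $x\,Z_2$ and $|x|\,Z_2$ have the same law because $Z_2$ is symmetric), and that conditioning on $Z_1$ is the same as conditioning on $X = \sigma Z_1$. Everything else is a direct application of the definition of a compound (mixture) distribution together with the scaling property of the normal family.
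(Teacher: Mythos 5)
Your proof is correct, but it takes a different route from the paper's. The paper argues at the level of densities: it writes the compound marginal $f_Y(y)=\int f_{Y/X}(y)f_X(x)\,dx$, pulls a factor $1/|x|$ out of the conditional normal density so that the integrand becomes $f_Z(y/x)f_X(x)/|x|$, and then recognizes this as exactly the product-density formula of Corollary~\ref{corollary1} applied to $Z_1$ and the independent variable $\sigma Z_2$. Your main argument instead builds an explicit coupling: set $X=\sigma Z_1$, observe that conditionally on $X=x$ the quantity $\sigma Z_1Z_2=xZ_2$ is $N(0,x^2)$, and conclude that the pair $(X,\sigma Z_1Z_2)$ realises the stated hierarchy, so the marginals agree. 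This is a genuinely probabilistic (random-variable level) argument that avoids any integral manipulation, and your remark that $N(0,x^2)$ depends on $x$ only through $x^2$ --- so the sign of $X$ is immaterial and $xZ_2\overset{d}{=}|x|Z_2$ by symmetry of $Z_2$ --- is exactly the small point that needs care and is handled correctly. What the paper's density route buys is that it exercises the product-of-variables machinery (Lemma~\ref{lemma1}, Corollary~\ref{corollary1}) that the appendix develops and reuses elsewhere, e.g.\ in deriving Eq.~\ref{eq:fK}; what your coupling buys is transparency and immediate generalizability (it is essentially the argument behind Lemma~\ref{lemma4}, where normality of $X$ plays no role). Your secondary density check coincides with the paper's proof, so nothing is missing.
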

\begin{proof}
    Conditional on $X=x$, the distribution of $Y$ is
    \[
    f_{Y/X}(y) = \dfrac{1}{\sqrt{2 \pi x^2}} e^{-y^2/(2 x^2)}
    \]
    According to the definition of a compound distribution, the marginal distribution of $Y$ is the integral of the conditional distribution times the distribution
    of $X$, which can be thought of as a prior:
    \[
    \begin{aligned}
    f_Y(y)= & \int_{-\infty}^{\infty} f_{Y/X}(y) f_X(x) dx = \\
    & \int_{-\infty}^{\infty}  
     \left( \dfrac{1}{\sqrt{2 \pi}} e^{-y^2/(2  x^2)}\right)  f_X(x)\, \dfrac{dx}{|x|}.
    \end{aligned}
    \]
    Now, the first integrand is $f_Z(y/x)$, where $Z$ is a standard normal variable,
    and $X = \sigma Z$ is the usual representation of $X$ as a 
    function of another (independent)
    standard normal distribution. According to Eq.~\ref{eq:corollary1}, the variable $Y$ is therefore distributed as the product of $Z_1$ and
    of an independent $\sigma Z_2$ variable. 
\end{proof}

Lemma~\ref{lemma3} lets us refer to the compounded variable $Y$ as the product $Y= \sigma Z_1\,Z_2$ of two zero--mean normal distributions, without loss of generality. It will be noticed that in Lemma~\ref{lemma3} the assumption of normality for $X$ was never used. This implies that a more
general result can be stated as follows.

\begin{lemma}[Representation of compounding of a standard distribution with random variance]
Let  $Y \sim N(0,\sigma^2 U)$ with $U$ a random variable and $\sigma^2$ a constant. Then the random variable $Y$ admits the equivalent representation
\begin{equation}
    Y =  \sigma \sqrt{U}\, Z
    \label{eq:compoundGeneral}
\end{equation}
    where $Z$ is a standard normal variable.
    \label{lemma4}
\end{lemma}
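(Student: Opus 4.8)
The plan is to mirror the argument already used for Lemma~\ref{lemma3}, of which this statement is the natural generalization obtained by dropping the normality assumption on the mixing variable. First I would note that, since $\sigma^2 U$ serves as a variance, necessarily $U \geq 0$ almost surely, so $\sqrt{U}$ is well defined, and I would take $Z$ to be a standard normal variable independent of $U$. Conditionally on $U = u$ the density of $Y$ is
\[
f_{Y/U}(y) = \dfrac{1}{\sqrt{2\pi \sigma^2 u}}\, e^{-y^2/(2\sigma^2 u)},
\]
which is precisely the density of the scaled variable $\sigma\sqrt{u}\,Z$. Assembling the marginal via the definition of compounding,
\[
f_Y(y) = \int_{0}^{\infty} f_{Y/U}(y)\, f_U(u)\, du,
\]
and recognising $f_{Y/U}(y)$ as $\frac{1}{\sigma\sqrt{u}} f_Z\!\left(\frac{y}{\sigma\sqrt{u}}\right)$, this is exactly the expression produced by Corollary~\ref{corollary1} for the density of the product of the independent variables $Z$ and $\sigma\sqrt{U}$, under the change of variable $x = \sigma\sqrt{u}$. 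Hence $Y \overset{d}{=} \sigma\sqrt{U}\,Z$, which is Eq.~\ref{eq:compoundGeneral}.

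The route I would actually take, so as not to presuppose that $U$ admits a density, is to argue at the level of characteristic functions (or moment generating functions, where they exist): conditioning on $U$ gives $\E[e^{itY}\mid U] = e^{-\sigma^2 U t^2/2}$, hence $\psi_Y(t) = \E[e^{-\sigma^2 U t^2/2}] = M_U(-\sigma^2 t^2/2)$; applying the same conditioning to $\sigma\sqrt{U}Z$ (using $Z$ independent of $U$) yields the identical expression, and equality of characteristic functions gives equality in distribution. This is also the form invoked later in the paper, e.g. in Eq.~\ref{eq:chfU} and in the proof of Lemma~\ref{lemma:representationSumLaplace}.

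There is no serious obstacle here; the only points requiring a word of care are the non-negativity (and measurability) of $U$, so that $\sqrt{U}$ makes sense and the conditional variance is legitimate, and the independence of the auxiliary normal $Z$ from $U$ — both of which are built into the hypothesis. The content is the same bookkeeping as in Lemma~\ref{lemma3}, now carried out with an arbitrary mixing law in place of the $\chi^2_1$ law, so the write-up should be correspondingly short.
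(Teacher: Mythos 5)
Your proposal is correct and follows essentially the same route as the paper, which proves Lemma~\ref{lemma4} by invoking Corollary~\ref{corollary1} together with the conditional normal density of $Y$ given the mixing variable, exactly as in Lemma~\ref{lemma3}. Your alternative characteristic-function argument is also consistent with the paper's own Lemma~\ref{lemma:chfZ}, so nothing further is needed.
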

\begin{proof}
    The proof is a direct consequence of Corollary~\ref{corollary1}, and the conditional
    normal distribution of $Y/\sigma$, following the same steps as Lemma~\ref{lemma3}.
\end{proof}

It is also possible to give a general expression for the characteristic function
of a compounded distribution of type of Eq.~\ref{eq:compoundGeneral}, following
\cite{jiang2019}.

\begin{lemma}[Characteristic function of a compound zero--mean normal random variable]
Let $Y \sim N(0,\sigma^2 U)$ with $U$ a random variable and $\sigma^2$ a constant. Then the random variable $Y$ has the characteristic function
\begin{equation}
    \psi_Y(t) = E_U[e^{-(U^2 \sigma^2 t^2)/2}]=M_U\left(\dfrac{\sigma^2 t^2}{2}\right),
\end{equation}
    where the expectation is with respect to the $U$ distribution, and $M_U(t)$ is the
    moment generating function  of the $U$ distribution.
    \label{lemma:chfZ}
\end{lemma}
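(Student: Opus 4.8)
The plan is to prove the identity by the same conditioning device used in \cite{jiang2019}: condition on the value of the mixing variable $U$, evaluate the resulting conditional characteristic function using the elementary formula for a centered Gaussian, and then average back over $U$ via the tower property of conditional expectation. Concretely, I would start from $\psi_Y(t)=\E[e^{itY}]$ and write $\psi_Y(t)=\E_U\bigl[\E[e^{itY}\mid U]\bigr]$. Conditionally on $U=u$ the variable $Y$ is $N(0,\sigma^2 u)$, so its conditional characteristic function is the standard Gaussian one, $\E[e^{itY}\mid U=u]=e^{-\sigma^2 u t^2/2}$, and substituting this back gives $\psi_Y(t)=\E_U\bigl[e^{-\sigma^2 U t^2/2}\bigr]$. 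By the definition of the moment generating function $M_U$ of $U$, this last expectation is $M_U$ evaluated at $-\sigma^2 t^2/2$, i.e. the right-hand side of the asserted identity, which completes the argument.

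An equivalent and slightly shorter route is to use the product representation $Y=\sigma\sqrt{U}\,Z$ of Lemma~\ref{lemma4}, with $Z$ a standard normal independent of $U$. Conditioning on $U=u$ turns $e^{itY}$ into $e^{i(t\sigma\sqrt{u})Z}$, whose expectation is the standard normal characteristic function evaluated at $t\sigma\sqrt{u}$, namely $e^{-(t\sigma\sqrt{u})^2/2}=e^{-\sigma^2 u t^2/2}$; averaging over $U$ then gives the same conclusion. In either form the only analytic ingredient is the characteristic function of a centered normal, which is entirely elementary (a completion of the square), so no special functions or limiting arguments are needed.

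I do not expect a real obstacle here; the one step that deserves a sentence of justification is the passage from $\E[e^{itY}]$ to the iterated expectation $\E_U\bigl[\E[e^{itY}\mid U]\bigr]$ — the implicit Fubini/tower-property step. This is immediate because $|e^{itY}|=1$, so the integrand is bounded and the iterated-expectation identity applies with no integrability hypothesis on $Y$ or on $U$. It is also worth noting, for the nonnegative ``variance'' variables $U$ relevant to this paper, that the argument $-\sigma^2 t^2/2$ at which $M_U$ is evaluated is nonpositive, whence $e^{-\sigma^2 U t^2/2}\le 1$ and the relevant value of $M_U$ is always finite even when the moment generating function of $U$ fails to exist on part of the positive axis; the formula therefore holds without further qualification.
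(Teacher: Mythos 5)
Your proof is correct and is essentially the argument the paper (following \cite{jiang2019}) merely sketches: condition on $U$, apply the characteristic function of a centered Gaussian, and recognize the resulting expectation over $U$ as a value of the moment generating function; your remarks on the tower property and on finiteness of $M_U$ at a nonpositive argument are sound additions. The only point worth flagging is that your (correct) computation gives $\psi_Y(t)=\E_U[e^{-U\sigma^2 t^2/2}]=M_U(-\sigma^2 t^2/2)$, so the exponent $U^2$ and the positive argument of $M_U$ in the lemma as printed are typographical slips in the statement rather than something your argument fails to reach (and indeed the paper itself uses the negative-argument version when it applies the lemma to sums of Laplace variables).
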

\begin{proof}
    The proof of the first equality is a simple exercise (see, e.g., \cite{jiang2019}), and it is based on the ch. f. of a zero--mean normal
    variable,
    \begin{equation}
        \psi(t)=e^{-(\sigma^2 t^2)/2}.
    \end{equation}
    The second equality follows from the definition of the \chf.
\end{proof}

\begin{corollary}[Characteristic function of the product of two zero--mean normal variables]
Let $Y\sim N(0,X^2)$ with $X\sim N(0,\sigma^2)$, with $\sigma^2$ a non--negative number.
Then $Y= \sigma Z_1 Z_2$, and its characteristic function is
\begin{equation}
    \psi_Y(t)= \dfrac{1}{\sqrt{1+\sigma^2 t^2}}
\end{equation}
    \begin{proof}
        This is a consequence of Lemma~\ref{lemma:chfZ}, and the simple proof is provided in \cite{jiang2019}. See also \cite{kotz2001}, Sec. 4.1.1, for the characteristic function of 
        the generalized asymmetric Laplace distribution, of which the Bessel--distributed $Y$ is a special case.   
    \end{proof}
    \label{corollary:chf}
\end{corollary}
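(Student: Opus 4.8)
The plan is to reduce the claim to the general characteristic-function formula of Lemma~\ref{lemma:chfZ}. First I would invoke the representation of Lemma~\ref{lemma3} (equivalently Lemma~\ref{lemma4}): writing $X = \sigma Z_2$ with $Z_2$ a standard normal independent of a standard normal $Z_1$, one gets $Y \sim N(0, \sigma^2 U)$ with $U = Z_2^2 \sim \chi^2_1$. This is exactly the situation to which Lemma~\ref{lemma:chfZ} applies, so $\psi_Y(t) = M_U(\sigma^2 t^2/2)$, where $M_U$ is the moment generating function of $U$.

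Second, I would use the fact recorded in Appendix~\ref{app0} that $\chi^2_1 = \gammaD(\nicefrac{1}{2},\nicefrac{1}{2})$, whose moment generating function is $M_U(s) = (1-2s)^{-1/2}$ for $s < \nicefrac{1}{2}$. By the conditioning step the relevant argument is $-\sigma^2 t^2/2$ (since $E[e^{itY}\mid U] = e^{-\sigma^2 U t^2/2}$, so the unconditional characteristic function is the mgf of $U$ at $-\sigma^2 t^2/2$, not $+\sigma^2 t^2/2$), and substituting this yields $\psi_Y(t) = (1 + \sigma^2 t^2)^{-1/2}$ for every $t \in \mathbb{R}$, which is the claimed formula.

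The step I would be most careful about is precisely that sign bookkeeping, since it is what produces the $+$ in $(1+\sigma^2 t^2)^{-1/2}$ rather than a $-$. A self-contained check that bypasses Lemma~\ref{lemma:chfZ} is to condition directly on $X$: $\psi_Y(t) = E_X[E[e^{itY}\mid X]] = E_X[e^{-X^2 t^2/2}]$, and then evaluate this Gaussian integral over $X \sim N(0,\sigma^2)$ by combining the two quadratic exponents, which gives $1/\sqrt{1+\sigma^2 t^2}$ immediately. As a further consistency check, the same expression follows by specializing the generalized Laplace characteristic function of Eq.~\ref{eq:chfGAL} to $\theta = \mu = 0$, $\tau = \nicefrac{1}{2}$ and scale $\sqrt{2}\,\sigma$, in line with the identification of $Y$ as a $\GAL(0,0,\sqrt{2}\sigma,\nicefrac{1}{2})$ variable used elsewhere in the paper.
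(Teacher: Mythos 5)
Your proof is correct and follows essentially the same route as the paper, which likewise reduces the claim to Lemma~\ref{lemma:chfZ} (deferring the computation to \cite{jiang2019}) and cross-checks against the GAL characteristic function. Your sign bookkeeping is well placed: as printed, Lemma~\ref{lemma:chfZ} writes $M_U(\sigma^2 t^2/2)$, whereas the conditioning actually gives $\E_U[e^{-U\sigma^2 t^2/2}]=M_U(-\sigma^2 t^2/2)$, and it is this negative argument in $M_U(s)=(1-2s)^{-1/2}$ for $U\sim\chi^2_1$ that produces the $+$ sign in $(1+\sigma^2 t^2)^{-1/2}$.
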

Another useful consequence of Lemma~\ref{lemma:chfZ} is a simple proof of the algebraic property that
 $Y = U Z = \sqrt{a} \sqrt{U/a}\; Z$ applies to the representation of
a zero--mean normal variable $Z$ with variance distributed according to $U$. 
\begin{corollary}[Divisibility property of the representation of the compounding of a standard distribution with random variance]
    A random variable $Y \sim N(0,U)$ is equivalent to the compounding of $Y\sim N(0, a V)$ with $V=U/a$, where $a>0$ is a constant, i.e., the Y variable can be equivalently represented as 
    \begin{equation}
        Y = \sqrt{U} Z \iff Y = \sqrt{a} \sqrt{\dfrac{U}{a}}\, Z , \text{ with } a > 0,
    \end{equation}
    where $Z$ is a standard normal random variable.
    \label{corollary7}
\end{corollary}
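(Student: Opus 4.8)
The plan is to establish the equivalence by comparing characteristic functions, taking Lemma~\ref{lemma:chfZ} as the main tool, rather than arguing sample-path by sample-path. First I would recognize each side of the claimed equivalence as an instance of the compounding set-up of Lemma~\ref{lemma4}: the representation $Y = \sqrt{U}\,Z$ corresponds to $Y \sim N(0,U)$, the case $\sigma^2 = 1$ with random variance $U$, while $Y = \sqrt{a}\,\sqrt{U/a}\,Z$ corresponds to $Y \sim N(0, a V)$ with deterministic constant $\sigma^2 = a$ and random variance $V \coloneqq U/a$.

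Next I would apply Lemma~\ref{lemma:chfZ} to both representations. The left-hand side has a characteristic function expressible through the moment generating function $M_U$ evaluated at $t^2/2$, and the right-hand side through $M_V$ evaluated at $a t^2/2$. Because $a>0$ is a deterministic constant, rescaling gives $M_V(\lambda) = \E[e^{\lambda U/a}] = M_U(\lambda/a)$, so $M_V(a t^2/2) = M_U(t^2/2)$: the two characteristic functions coincide for every $t\in\mathbb{R}$. By the uniqueness theorem for characteristic functions the two representations induce the same law for $Y$, which is precisely the asserted equivalence.

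There is no genuine obstacle: the mathematical content reduces to the pointwise identity $\sqrt{a}\,\sqrt{U/a} = \sqrt{U}$, valid because $a>0$ and both radicands are non-negative, and the characteristic-function computation merely recasts it in the compounding language so that it can be cited directly when manipulating the $\GAL$ and Bessel representations elsewhere in the paper. The one hypothesis worth flagging explicitly is $a>0$: it guarantees that $U/a$ is still an admissible (non-negative) random variance and that the rescaling $M_V(\lambda)=M_U(\lambda/a)$ is legitimate; beyond that, no integrability or measurability issues arise that were not already dispatched in Lemmas~\ref{lemma:chfZ} and~\ref{lemma4}.
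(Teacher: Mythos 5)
Your proposal is correct and follows essentially the same route as the paper: the paper's proof also invokes Lemma~\ref{lemma:chfZ} and observes that the characteristic function depends on the constant and the random variance only through their product, which is exactly the identity $M_V(a t^2/2)=M_U(t^2/2)$ you spell out. Your version merely makes the rescaling of the moment generating function and the appeal to uniqueness of characteristic functions explicit.
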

\begin{proof}
    The proof is an immediate consequence of the characteristic function of Lemma~\ref{lemma:chfZ}, which depends on the $U$ variable and on the variance $\sigma^2$ only via the product $U^2 \sigma^2$.
\end{proof}
\end{appendix}


\section*{Funding}
This work was supported in part by the NASA Astrophysics Data Analysis Program (ADAP2018B) 
grant \emph{Closing the Gap on the Missing Baryons at
Low Redshift with multi--wavelength observations
of the Warm--Hot Intergalactic
Medium}, and by the NSF EPSCoR grant \emph{WP-SEEDS: Workshop and Partnership for Southern Excellence and Education in Data Science},
awarded to the University of Alabama in Huntsville.


\bibliographystyle{jds}


\end{multicols}
\end{document}